\newcommand{\ee}{\varepsilon}
\newcommand{\NN}{\mathbb N}
\newcommand{\ZZ}{\mathbb Z}
\newcommand{\RR}{\mathbb R}
\newcommand{\man}{m}
\newcommand{\manStrat}{M}
\newcommand{\goal}{g}
\newcommand{\lion}{l}
\newcommand{\deltaT}{\sigma}
\newcommand{\rhoD}{\rho}
\newcommand{\rD}{r}
\DeclareMathOperator{\dist}{dist}
\newcommand{\mydef}{:=}
\newcommand{\deriv}{\delta}
\newtheorem{theorem}{Theorem}
\newtheorem{claim}[theorem]{Claim}
\newtheorem{lemma}[theorem]{Lemma}
\title{Escaping an Infinitude of Lions}
\author{Mikkel Abrahamsen\thanks{Basic Algorithms Research Copenhagen (BARC), University of Copenhagen, Denmark.
$\{$\texttt{miab},\texttt{jaho},\texttt{koolooz}$\}$\texttt{@di.ku.dk}.
MA was supported by Advanced Grant DFF-0602-02499B from the Independent Research Fund Denmark under the Sapere Aude research career programme.
CWN was supported by the Starting Grant DFF-7027-00050B from the Independent Research Fund Denmark under the Sapere Aude research career programme.
}
\and Jacob Holm$^*$ \and 
Eva Rotenberg\thanks{Technical University of Denmark. \texttt{erot@dtu.dk}
} \and
Christian Wulff-Nilsen$^*$
%%%  Universitetsparken 5\\
%%%  DK-2100 Copanhagen \O\\
%%%  Denmark\\
}
\let\copyrightline\@empty
\theoremstyle{theorem}
\date{} 
\begin{document}

\maketitle

\begin{abstract}
We consider the following game played in the Euclidean plane:
There is any countable set of unit speed lions and one fast man who can run with speed $1+\ee$ for some value $\ee>0$.
Can the man survive?
We answer the question in the affirmative for any $\ee>0$.
% By letting the number of lions tend to infinity, we furthermore show that the man can survive against any countably infinite set of lions.
\end{abstract}

\paragraph{Note:}
A preliminary version of this paper was part of the paper~\cite{abrahamsen2017best}.
After the publication of this paper~\cite{abrahamsen2020escaping}, it has come to our attention that Chernous'ko~\cite{chernousko1976problem} proved already in 1976 that a fast man can escape any finite number of lions.
It seems that the same technique can be used to escape from an infinite set.
For work on a related game and more references, see the paper by Ibragimov, Salimi, and Amini~\cite{ibragimov2012evasion}.

\section{Introduction}
``A lion and a man in a closed circular arena have equal maximum speeds.
What tactics should the lion employ to be sure of his meal?''\footnote{The
curve of pursuit ($L$ running always straight at $M$) takes infinite time,
so the wording has its point.} These words (including the footnote) introduce the
now famous lion and man problem, invented by R.~Rado in the
late 1930s, in Littlewood's
\emph{Miscellany}~\cite{littlewood1986littlewood}.

It was for a long time believed that
in order to avoid the lion,
it was optimal for the man to run on the boundary of the arena. A simple
argument then shows that the lion
could always catch the man by staying on the radius containing the
man while approaching him as much as possible.
However, A. S.~Besicovitch
proved in 1952 that the man has a very simple strategy
(following which he will approach but not reach the boundary)
that enables him to avoid capture forever no matter
what the lion does. See~\cite{littlewood1986littlewood} for details.

One can prove that
two lions are enough to catch the man in a circular arena,
and Croft~\cite{croft1964lion}
proved that in general a necessary and sufficient number of %$n$-dimensional 
birds
to catch a
%n $n$-dimensional 
fly inside an $n$-dimensional spherical cage is just
$n$ (again, we assume that the fly and the birds are points with equal maximum speeds).

Rado and Rado~\cite{rado1974more} and
Jankovi{\'c}~\cite{jankovic1978about}
considered the problem where there are many lions
and one man, but where the game is played in the entire unbounded plane.
They proved that the lions can catch the man
if and only if the man starts in the interior of the convex hull of
the lions. Inspired by that problem,
we ask the following question: What if the lions have maximum speed
$1$ and the man has maximum speed $1+\ee$ for some $\ee>0$?
We prove that for any $\ee >0$ and any countably infinite set of lions, such a fast man can survive forever provided that he does not start at the same point as one of the lions.
We find this result surprising.
Indeed, it is difficult to imagine how that man proceeds if, say, he starts at the point $(\sqrt 2,0)$ and there are lions at all points with two rational coordinates.
In Section~\ref{sec:mainThm}, we state the theorem and show how it follows from a technical lemma.
In Section~\ref{sec:techLemma}, we prove the technical lemma.

Lion and man games with a fast man have been considered previously.
Ramana and Kothari~\cite{ramana2017pursuit} and Bakolas~\cite{bakolas2017finite} studied variants with one or more lions that catch the man if he is within a certain positive distance from a lion.
Here, the movements of the man and the lions are also restricted in other ways than just limiting the speeds.
Flynn~\cite{flynn1973lion,flynn1974lion} and
Lewin~\cite{lewin1986lion} studied the problem where there is one lion and one fast man
in a circular arena. The lion tries to get as close to the man as possible and the
man tries to keep the distance as large as possible.
Variants of the cop and robber game (a discrete version of the lion and man game played on graphs) where the robber is faster than the cops have also been studied. See for instance~\cite{alon2015chasing,fomin2010pursuing}.

\subsection{Definitions}\label{definitions}

We follow the conventions of Bollob{\'a}s et al.~\cite{bollobas2012lion} for games with one lion and one man.
Let $R\subseteq\RR^2$ be a region in the plane on which the game is to be played, and assume that the lion starts at point $l_0$ and the man
at point $m_0$.
We define a \emph{man path} to be a function $m\colon [0, \infty)\longrightarrow R$ satisfying $m(0)=m_0$ and
the Lipschitz condition
$\lVert m(s)-m(t)\rVert\leq (1+\ee)\cdot \lvert s-t\rvert$ for some small
$\ee>0$. Note that it follows from the Lipschitz condition
that any man path is continuous.
A \emph{lion path} $l$ is defined similarly, but with $\ee=0$, i.e., the lion always runs
with (at most) unit speed.
%nfitely
Let $\mathcal L$ be the set of all lion paths and $\mathcal M$ be the set of all
man paths. Then a \emph{strategy} for the man
is a function $M\colon \mathcal L\longrightarrow\mathcal M$ such that
if $l,l'\in\mathcal L$ agree on $[0,t)$, then $M(l)$ and $M(l')$ also agree on
$[0,t]$. This last condition is a formal way to describe that the man's position
$M(l)(t)$, when he follows strategy $M$, depends only on
the position of the lion at points in time
before time $t$, i.e., he is not allowed to act based on the lion's future
movements.
(By the continuity of the paths, it does not matter whether these time intervals are open or closed.)
A strategy $M$ for the man is \emph{winning} if for any $l\in\mathcal L$ and
any $t\in[0,\infty)$, we have that $M(l)(t)\neq l(t)$.
Similarly, a strategy for the lion $L\colon\mathcal M\longrightarrow\mathcal L$
is winning if for any $m\in\mathcal M$, we have that $L(m)(t)=m(t)$
for some $t\in[0,\infty)$.
These definitions are extended to games with more than one lion in the natural way.

It might seem unfair that the lion is not allowed to react on the man's movements
when we evaluate whether a strategy $M$ for the man is winning. However,
we can give the lion full information about $M$ and
allow it to choose its path $l$ depending on $M$ \emph{prior}
to the start of the game. If $M$ is a winning strategy, the man can also
survive the lion running along $l$.

We call a strategy of the man $M$ \emph{locally finite} if it satisfies the following property:
if $l$ and $l'$ are any two
lion paths that agree on $[0, t]$ for some $t$, then the corresponding man paths
$M(l)$ and $M(l')$ agree on $[0, t+\delta]$ for some
$\delta > 0$. (We allow $\delta$ to depend on $l\vert_{[0,t]}$.)
Informally, a strategy is locally finite if at any point in time the man commits to doing something for some positive
amount of time in the future dependent only on the situation in the past.
In particular, the strategy is \emph{not} locally finite if it involves staying at a fixed distance from the lion or other behaviors that require the man to react instantly based on the lion's actions.

Bollob{\'a}s et al.~\cite{bollobas2012lion} proved that if the man has a locally finite
winning strategy, then the lion does not have any winning strategy.
The argument easily extends to
games with multiple lions.
At first sight, it might sound absurd to even consider the possibility that the lion
has a winning strategy when the man also does. However, it does not follow
from the definition that the existence of a winning strategy for the man
implies that the lion does not also have a winning strategy.
See the paper by Bollob{\'a}s et al.~\cite{bollobas2012lion} for a detailed discussion
of this (including descriptions of variants of the lion and man game where
both players have winning strategies). The winning
strategy of the fast man against finitely many lions is locally finite, so it follows that the lions do not
have a winning strategy.
In fact, the man's strategy satisfies the stronger
condition that it is \emph{equitemporal}, i.e., there is a $\deltaT>0$ such that the man at any point in time $i\cdot\deltaT$, for $i\in\mathbb N_0$,
decides his behavior until time $(i+1)\cdot\deltaT$.
This is a special case of a locally finite strategy since at time $t\in [i\cdot\deltaT,(i+1)\cdot\deltaT)$, the man commits to doing something in the future interval $(t,(i+1)\cdot\deltaT]$.
However, as the number of lions tends to infinity, $\deltaT$ tends to $0$, so the winning strategy against infinitely many lions is not locally finite, and the lions might have a winning strategy as well.

\section{Main Theorem}\label{sec:mainThm}

\begin{theorem}\label{thm:main}
    In the Euclidean plane, for any $\ee>0$, let a man able to run at speed $1+\ee$ start at a point $\man_0$ and a unit speed lion start at a point $\lion_i(0)\neq\man_0$ for each $i\in\NN$.
    Then the man has a winning strategy against these infinitely many lions.
\end{theorem}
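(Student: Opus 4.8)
The plan is to reduce Theorem~\ref{thm:main} to a technical lemma that supplies an explicit winning strategy under a convenient normalization, and then (in the later section) to establish that lemma by combining the known escape strategy against finitely many lions with a budgeting and limiting argument. First I would normalize: after a translation assume $\man_0=0$, and fix once and for all an enumeration $\lion_1,\lion_2,\dots$ of the lions; since $\lion_i(0)\neq 0$, every lion starts at some positive distance $d_i\mydef\lVert\lion_i(0)\rVert>0$ from the man, although $\inf_i d_i$ may be $0$ (as in the configuration with lions at all points with rational coordinates). The surplus speed $\ee$ will be split into a convergent series $\ee=\sum_{i\ge 1}\ee_i$ with $\ee_i\mydef\ee\,2^{-i}$, and $\ee_i$ is thought of as the ``maneuvering budget'' reserved for dealing with lion~$i$.

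Next I would describe the man's motion. The basic building block is the fact --- essentially Chernous'ko's argument --- that against finitely many lions the man has an equitemporal winning strategy: he moves along straight segments, maintaining for each lion a receding barrier line so that the lion stays on its far side, and he uses his speed surplus within each step to repair all the barriers. For infinitely many lions I would have the man process the lions in order: on an initial interval (whose length depends only on the past of $\lion_1$) he maneuvers into a configuration from which he can flee $\lion_1$ forever using only budget $\ee_1$; while maintaining that, he next incorporates $\lion_2$ using only budget $\ee_2$ and a detour of bounded total length; and so on, so that the detours form a convergent series and at every time $t$ the man is simultaneously ``fleeing'' all lions incorporated so far. The technical lemma would state precisely that, following this strategy, for every $i$ there is an explicit positive function $\psi_i$ with $d_i(t)\ge\psi_i(t)>0$ for all $t$; Theorem~\ref{thm:main} then follows immediately, since $\man(t)\neq\lion_i(t)$ for all $i$ and all $t$.

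The main obstacle is twofold. The geometric heart is to show that a feasible velocity of speed $1+\ee$ always exists: fleeing lion~$i$ forbids a cone of velocity directions, and finitely many such cones must always leave a nonempty feasible set --- which fails outright if the man is forced to permanently flee two lions lying on opposite sides of him, so the preliminary maneuver for each new lion must first bring the man ``past'' that lion (out of the relevant convex hull) before committing to flee it, all within the small reserved budget $\ee_i$. The second difficulty is the limiting argument: one must check that incorporating lion $i+1$ never destroys the guarantees $\psi_1,\dots,\psi_i$ already secured, and that the infinitely many maneuvers neither cause the man's path to degenerate nor force some $\psi_i$ down to $0$ in finite time. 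The configuration with lions dense in the plane --- where the man is never outside the global convex hull of the lions --- is the stress test that the budgeting and the bookkeeping of the $\psi_i$ must survive.
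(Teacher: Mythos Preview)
Your high-level ingredients---splitting the speed surplus as a convergent series and passing to a limit---match the paper's, but the strategy you describe has a genuine gap, and it is precisely the one your final paragraph names as a ``stress test'' without resolving. You propose to incorporate the lions \emph{sequentially in time}: on an initial interval the man secures himself against $\lion_1$, then on a later interval additionally against $\lion_2$, and so on; at every finite $t$ he is fleeing only the lions ``incorporated so far''. But the unincorporated lions are not idle. If the lions' starting points are dense, then for every $\delta>0$ some lion $\lion_k$ has $\lVert\lion_k(0)-\man_0\rVert<\delta$, and nothing in your scheme prevents $\lion_k$ from catching the man during $[0,T_{k-1}]$, before it is ever incorporated. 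Since $T_1<T_2<\cdots$, the incorporation times cannot all be made small, so a temporal concatenation cannot handle lions that cluster at the starting point; your proposed $\psi_k$ would have to be produced \emph{before} you have done anything about $\lion_k$.

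The paper sidesteps this by running the induction \emph{in the index, not in time}. Lemma~\ref{LEMMA:MAIN} produces, for each $n$, a strategy $\manStrat_n$ defined for \emph{all} $t\ge 0$ that keeps a fixed safety distance $c_n>0$ from $\lion_n$ and satisfies $\lVert\manStrat_n(t)-\manStrat_{n-1}(t)\rVert\le\delta_n$ uniformly in $t$; crucially, $\manStrat_n$ depends only on $\delta_2,\ldots,\delta_n$, so one may choose the $\delta_n$ recursively to force $\sum_{j>i}\delta_j<c_i/2$ for every $i$. Then $(\manStrat_n(t))_n$ is Cauchy, the limit $\manStrat_\infty$ is a legal man path, and $\lVert\manStrat_\infty(t)-\lion_i(t)\rVert\ge c_i/2>0$ for all $i$ and all $t$. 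There is no time of incorporation: every $\manStrat_n$ already avoids $\lion_1,\ldots,\lion_n$ from $t=0$ onward, so the dense configuration is handled automatically. The geometric step---how $\manStrat_n$ perturbs $\manStrat_{n-1}$ to dodge $\lion_n$ while staying $\delta_n$-close---is carried out not with receding barrier half-planes but by placing milestones along the $\manStrat_{n-1}$ path and having the man make short circular avoidance moves around $\lion_n$ before escaping toward the next milestone.
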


In the rest of the article, we adopt the setting of Theorem~\ref{thm:main}.
We use $\manStrat_n$ to denote a strategy (to be defined later) of the man against $n$ lions starting at the points $\lion_1(0),\ldots,\lion_n(0)$.
We assume that we are given arbitrary lion paths $\lion_1,\lion_2,\ldots$ such that $\lion_i$ starts at the point $\lion_i(0)$.
% Consider lion paths $\lion_1,\ldots,\lion_n$ with start positions $\lion_1(0),\ldots,\lion_n(0)$, respectively.
With slight abuse of notation, we also use $\manStrat_n$ to denote the man path prescribed by strategy $\manStrat_n$ against the lion paths $\lion_1,\ldots,\lion_n$, that is, $\manStrat_n$ is shorthand for the formally correct notation $\manStrat_n(\lion_1,\ldots,\lion_n)$.

In order to prove the theorem, we make use of the following technical lemma, the proof of which is deferred to Section~\ref{sec:techLemma}.

\begin{lemma}\label{LEMMA:MAIN}
In the Euclidean plane, for any $\ee>0$, let a man able to run at speed $1+\ee$ start at a point $\man_0$ and a unit speed lion start at a point $\lion_i(0)\neq\man_0$ for each $i\in\NN$.
% Let $\ee> 0$ and suppose that a lion $\lion_i$ starts at point $\lion_i(0)$ for any $i\in\NN$ and a man able to run at speed $1+\ee$ starts at a distinct point $\man_0$.
Let a sequence $(\deriv_i)_{i\geq 2}$ of positive real numbers be given.
For every $n\in\NN$, there is a strategy $\manStrat_n$ for the man against $n$ lions starting at points $\lion_1(0),\ldots,\lion_n(0)$ with the following properties:
\begin{enumerate}[leftmargin=1.6em]
\item
There is a \emph{safety distance} $c_n>0$ such that while following $\manStrat_n$, the man maintains distance at least $c_n$ from lion $\lion_n$. \label{LEMMA:MAIN:PROP0}

\item
If $n\geq 2$, then $\|\manStrat_{n-1}(t)-\manStrat_n(t)\|\leq \deriv_n$ at any time $t$. \label{LEMMA:MAIN:PROP0a}

\item
$\manStrat_n$ depends on $\deriv_i$ only for $i\in\{2,\ldots,n\}$. \label{LEMMA:MAIN:PROP1}
\end{enumerate}
\end{lemma}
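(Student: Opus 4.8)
The plan is to build the strategies $\manStrat_n$ by induction on $n$, where $\manStrat_n$ is obtained from $\manStrat_{n-1}$ by a small perturbation that dodges the $n$-th lion. The base case $n=1$ is essentially classical: against a single unit-speed lion, a man with speed $1+\ee$ has a winning strategy that maintains a positive safety distance $c_1$ from $\lion_1$ — for instance, run directly away from the lion's initial position (or use the known escape strategy for one faster evader), which keeps the gap bounded below. For the inductive step, assume $\manStrat_{n-1}$ is defined, keeps distance $\geq c_{n-1}$ from $\lion_{n-1}$, and satisfies the perturbation bounds with respect to $\deriv_2,\dots,\deriv_{n-1}$. We want to define $\manStrat_n$ so that (i) it stays within $\deriv_n$ of $\manStrat_{n-1}$ at all times, and (ii) it keeps some positive distance $c_n$ from $\lion_n$.

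The key idea for the step is to run the ``virtual'' path $\manStrat_{n-1}$ as a reference and let $\manStrat_n$ be a corrected version of it: at each moment the man follows $\manStrat_{n-1}$ but adds a small perpendicular (or otherwise chosen) corrective velocity whenever $\lion_n$ gets close, spending the speed surplus $\ee$ — or a fraction of it reserved for lion $n$ — to maintain a cushion. Concretely, one can think of a ``shadow man'' moving along $\manStrat_{n-1}$, and the real man staying in a tiny disk of radius $\deriv_n$ around the shadow; inside that disk the real man has just enough freedom of maneuver, using the extra speed allocated to dodging lion $n$, to always keep $\lion_n$ from closing in. Since $\lion_n(0)\neq \man_0$, there is an initial gap, and the argument should show that a suitable small corrective motion (bounded so the man never leaves the $\deriv_n$-neighbourhood of the reference path) suffices to preserve a positive distance $c_n$ forever; this is where the real work lies. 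Property~\ref{LEMMA:MAIN:PROP1} will be immediate from the construction, since $\manStrat_n$ is defined purely in terms of $\manStrat_{n-1}$ (which by induction depends only on $\deriv_2,\dots,\deriv_{n-1}$) and the new parameter $\deriv_n$.

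The main obstacle — and the heart of the lemma — is arranging the speed budget so that all the corrections can coexist. The man has only a fixed surplus $\ee$ over unit speed, yet over the lifetime of the construction he must simultaneously make his corrective maneuvers against lion $1$, lion $2$, and so on; the perturbations $\deriv_n$ must be chosen small enough (and the speed fraction devoted to lion $n$ small enough) that the cumulative effect of all future corrections never forces the man to exceed speed $1+\ee$, while still each individual correction is large enough to keep its lion at bay. This is a delicate balancing act: one expects to reserve, say, a geometrically decreasing share of $\ee$ for successive lions, and to choose $c_n$ and the corrective-motion parameters accordingly. A secondary technical point is verifying that $\manStrat_n$ is a legitimate man path (Lipschitz with constant $1+\ee$, hence continuous) and a legitimate strategy (the man's position at time $t$ depends only on the lions' positions before time $t$); both should follow routinely from the explicit construction once the speed accounting is in place.
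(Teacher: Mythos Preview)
Your outline matches the paper's structure --- induction on $n$, a geometric split of the speed surplus (the paper uses $\ee_n = (1-2^{-n})\ee$ so that step $n$ has margin $2^{-n}\ee$), and $\manStrat_n$ built as a $\deriv_n$-perturbation of $\manStrat_{n-1}$ --- and property~\ref{LEMMA:MAIN:PROP1} does fall out automatically. But the core of the lemma is exactly the part you flag with ``this is where the real work lies'' and then skip. A ``small perpendicular (or otherwise chosen) corrective velocity'' is not a mechanism: with only the tiny surplus $\ee_n-\ee_{n-1}$ available over the reference path, and confined to a $\deriv_n$-disk, there is no evident reason a sidestep keeps $\lion_n$ at bay (the lion can mirror it). The paper's device is specific and not suggested by your sketch: place milestones on $\manStrat_{n-1}$ at spacing $\leq \deriv_n/2$ and run toward them; when $\lion_n$ comes within a tiny radius $\rD \ll \deriv_n$, perform \emph{avoidance moves} (run counterclockwise around $\lion_n$ at radius $\approx \rD$) until the geometry permits an \emph{escape move} straight toward the current milestone. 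The substance of the proof is a chain of claims showing that the safety distance $\approx \rD$ is preserved, that the man escapes after time at most $\tau = 6\pi \rD/\ee_n$, that after an escape move $\lion_n$ cannot interfere again before the next milestone, and that with $\rD$ chosen so that $(1+\ee_n)\tau$ is tiny relative to the milestone spacing, the total detour never exceeds $\deriv_n$. None of this analysis is present in your proposal.

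Two smaller misreadings. The $\deriv_i$ in the lemma are \emph{given}, not chosen --- you must produce $\manStrat_n$ for an arbitrary $\deriv_n > 0$, so ``$\deriv_n$ must be chosen small enough'' is not available to you here (that choice is made later, in the proof of Theorem~\ref{thm:main}). And Lemma~\ref{LEMMA:MAIN} asks only for a safety distance from $\lion_n$, not from $\lion_1,\ldots,\lion_{n-1}$; so ``arranging the speed budget so that all the corrections can coexist'' is not the obstacle for this lemma --- that bookkeeping is the content of the theorem's proof, not the lemma's.
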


We now show how Theorem~\ref{thm:main} follows from the lemma.

\begin{proof}[Proof of Theorem~\ref{thm:main}.]
We define a sequence $(\deriv_i)_{i\geq 2}$ inductively, as follows.
Suppose that for some $n\geq 2$, we have already defined $\delta_i$ for all $i\in\{2,\ldots,n-1\}$.
By Lemma~\ref{LEMMA:MAIN} (note in particular the use of property~\ref{LEMMA:MAIN:PROP1}), these values of $\delta_i$ yield strategies $\manStrat_1,\ldots,\manStrat_{n-1}$ with associated safety distances $c_1,\ldots,c_{n-1}$.
% In order to define $\deriv_n$, we consider the strategies $\manStrat_1,\ldots,\manStrat_{n-1}$ and their safety distances $c_1,\ldots,c_{n-1}$ guaranteed by the lemma.
We then define
$$\deriv_n\mydef\min\left\{1/2^n,\min_{i\in\{1,\ldots,n-1\}}\frac{c_i}{2^{n-i+1}}\right\}.$$
% Note that this is well-defined because of property~\ref{LEMMA:MAIN:PROP1} from the lemma.

We now claim that the resulting sequence of strategies $\manStrat_1,\manStrat_2,\ldots$ converges to a winning strategy $\manStrat_\infty$.
Note that for any $n,m\in\NN$, where $n<m$, we have
$$
\|\manStrat_n(t)-\manStrat_m(t)\|\leq \sum_{i=n}^{m-1}\|\manStrat_i(t)-\manStrat_{i+1}(t)\|\leq \sum_{i=n}^{m-1}1/2^{i+1}<1/2^n.
$$
Hence, the sequence $(\manStrat_n(t))_{n\in\NN}$ is a Cauchy sequence and converges to some point $\manStrat_\infty(t)$.

We need to ensure that $M_\infty(t)$ moves with speed at most $1+\ee$.
This is indeed the case since for any two points in time $s,t$, we have for any $n$ that $\|M_n(s)-M_n(t)\|\leq(1+\ee)\cdot|s-t|$.
Therefore, it must also be the case that $\|M_\infty(s)-M_\infty(t)\|\leq(1+\ee)\cdot|s-t|$.

Finally, we need to verify that $\manStrat_\infty$ is winning.
To this end, we see that for any lion $\lion_i$ and strategy $\manStrat_n$ where $i\leq n$, we have
\begin{align*}
\|\manStrat_n(t)-\lion_i(t)\|\geq{} & \|\manStrat_i(t)-\lion_i(t)\|-\sum_{j=i}^{n-1}\|\manStrat_j(t)-\manStrat_{j+1}(t)\| \\
\geq{} & c_i-\sum_{j=i+1}^{n}\deriv_j\geq c_i\left(1-\sum_{j=1}^{n-i} 1/2^{j+1}\right)>c_i/2.
\end{align*}
It follows that $\manStrat_n$ is winning against the lions $\lion_1,\ldots,\lion_n$.
Furthermore, as the man maintains distance at least $c_i/2$ from $\lion_i$ using any strategy $\manStrat_n$, $i\leq n$, we obtain that the limiting strategy $\manStrat_\infty$ maintains the same distance, so $\manStrat_\infty$ is indeed winning against all the lions.
\end{proof}

As a side remark, we note that it follows from the proof that using the strategy $M_n$ against a finite number $n$ of lions, the man is able to maintain the distance $d_n\mydef \min\{c_1/2,\ldots,c_n/2\}$ from every lion $\lion_1,\ldots,\lion_n$.
Thus, if the $n$ lions and man are disks with radius less than $d_n/2$, the man is still able to win.

\section{Proof of Lemma~\ref{LEMMA:MAIN}}\label{sec:techLemma}

%COPY PASTE
% In this section, we show our result for the case where the number of lions is finite:

% \begin{theorem}\label{thm:fastMan}
%    In the Euclidean plane, for any $\ee>0$, a man able to run at speed $1+\ee$ has a locally finite winning strategy that makes him escape the convex hull of any number $n\in\NN$ of unit-speed lions, provided that the man does not start at the same point as a lion.
% \end{theorem}

% In fact, we prove that the man is able to maintain some minimum distance $d_{\ee n}$ from any lion, where $d_{\ee n}$ only depends on $\ee$, $n$, and the initial distances to the lions. Thus, if the $n$ lions and man are disks with radius $<\frac{1}{2} d_{\ee n}$, the man is still able to escape.

%END OF COPY PASTE
%COPY PASTE

We first give a high-level description of the proof.
We proceed by induction on $n$.
% We define $\manStrat_n$ such that the man keeps a safety distance $c_{nk}>0$ from the lion $\lion_k$, $k\leq n$.
The strategy $\manStrat_n$ yields a curve consisting of line segments all of the same length.

Let $\ee_n\mydef (1-2^{-n})\cdot\ee$ so that $\ee_n<\ee$ for all $n$ and $0<\ee_1<\ee_2<\cdots$.
We define the strategy $\manStrat_n$ such that the man runs at constant speed $1+\ee_n$ when following $\manStrat_n$.
Inductively, we know the previous strategy $\manStrat_{n-1}$ by which the man runs at speed only $1+\ee_{n-1}$.
We place \emph{milestones} at the curve defined by $\manStrat_{n-1}$ equidistantly at distance at most $\deriv_n/2$.
When using strategy $\manStrat_n$, the man runs toward the milestones one by one, so that at any time $t$, he is close to the point $\manStrat_{n-1}(t)$ where he would be when following strategy $\manStrat_{n-1}$.

Using strategy $M_n$, the man runs with speed $1+\ee_n$, i.e., slightly faster than when using strategy $M_{n-1}$.
This gives time to make some detours caused by the lion $\lion_n$ while still being close to the milestone prescribed by strategy $M_{n-1}$.
If $\lion_n$ gets too close, the man makes an \emph{avoidance move}, keeping a safety distance $c_n$ from $\lion_n$.
Intuitively, when performing avoidance moves, the man runs counterclockwise around a circle of radius $c_n$ centered at the lion $\lion_n$.
After a limited number of avoidance moves, the man can make an \emph{escape move}, where he simply runs toward the milestone defined by the strategy $M_{n-1}$.

By choosing $c_n$ sufficiently small, we can make sure that the detour caused by the lion $\lion_n$ is so small that it can only annoy the man once in between any neighboring pair of milestones.
It is then possible to ensure that he will be within distance $\deriv_n$ from the position defined by strategy $M_{n-1}$ at any time, as required by the lemma.
%END OF COPY PASTE

\begin{proof}[Proof of Lemma~\ref{LEMMA:MAIN}]
We assume without loss of generality that $\ee<1$.
% Let $\lion_1,\lion_2,\ldots$ be a sequence of arbitrary lion paths and let the man start at position $\man_0$ such that $\man_0\neq \lion_i(0)$ for all $i$.
We define strategies $\manStrat_1,\manStrat_2,\ldots$ such that $\manStrat_n$ is a strategy for a man starting at point $\man_0$ against $n$ lions starting at points $\lion_1(0),\ldots,\lion_n(0)$.
% With slight abuse of notation, we use $\manStrat_n$ as shorthand of $\manStrat_n(\lion_1,\ldots,\lion_n)$.
We define these strategies so that $\manStrat_n$ has the following properties in addition to the properties~\ref{LEMMA:MAIN:PROP0}--\ref{LEMMA:MAIN:PROP1} stated in the lemma:
\begin{enumerate}[leftmargin=1.6em]
\setcounter{enumi}{3}
\item \label{prop:2}
The man is always running at speed $1+\ee_n$, where $\ee_n\mydef(1-2^{-n})\cdot\ee$.

\item \label{prop:3}
Let $t_i\mydef i\cdot \deltaT_n$ for some $\deltaT_n$ to be defined.
The path defined by $\manStrat_n$ is a polygonal chain with corners
$\manStrat_n(t_0),\manStrat_n(t_1),\ldots$ and
each segment has the same length $\|\manStrat_n(t_i)-\manStrat_n(t_{i+1})\|=\deltaT_n\cdot(1+\ee_n)$.

\item \label{prop:3a}
The point $\manStrat(t_{i+1})$ can be determined from the positions of the man and the lions $\lion_1,\ldots,\lion_n$ at time $t_i$.

% \item \label{prop:1}
% There exists an $i\in\NN$ such that for all $t\geq t_i$, the man is outside the convex hull of the lions $\lion_1,\ldots,\lion_n$ at time $t$.
\end{enumerate}

% Properties~\ref{prop:4} and~\ref{prop:4a} are repeated from the statement of the lemma.
Properties~\ref{prop:2}--\ref{prop:3a} imply that the strategy is equitemporal and hence locally finite, as explained in the Introduction.
We prove the statement by induction on $n$.
For $n=1$, the man will run on the same ray all the time with constant speed $1+\ee_1=1+\ee/2$.
The man runs directly away from the lion, i.e., in direction $\man_0-\lion_1(0)$, and we define $\deltaT_1\mydef 1$.
This strategy obviously satisfies the stated properties.
Assume now that a strategy $M_{n-1}$ with the stated properties has been defined.

%EVA: Antagelse om afstande udkommenteret
%Assume without loss of generality that the lions are numbered according to
%their (increasing) distance to the man at time $0$, i.e.,
%$\|m_0-\lion_1(0)\|\leq\|m_0-\lion_2(0)\|\leq\cdots\leq\|m_0-\lion_n(0)\|$.
% For any $i\in\{1,\ldots,n\}$, let $\manStrat_i$ be shorthand for $\manStrat_i(\lion_1,\ldots,\lion_i)$ and $\man$ shorthand for $\manStrat_n$.

The strategy $\manStrat_{n-1}$ defines a path consisting of segments of length $\ell\mydef \deltaT_{n-1}(1+\ee_{n-1})$.
% We place langmarks so that each segment is divided into equally long pieces of length at most $\deriv_n/2$.
On each segment we place \emph{milestones} equidistantly (including at the endpoints) so that the segment is divided into $p\mydef \left\lceil\frac{\ell}{\deriv_n/2}\right\rceil$ pieces.
The length of each piece is then $\ell/p\leq \deriv_n/2$.
At any time $t$, the succeeding milestone that the man would pass when following the strategy $\manStrat_{n-1}$ is the point
$$\goal(t)\mydef \manStrat_{n-1}\left(\left\lfloor \frac t{\deltaT_{n-1}/p}+1\right\rfloor\cdot\deltaT_{n-1}/p\right).$$
By property~\ref{prop:3a}, the man can at any time $t$ compute the point
$\goal(t)$.

We now describe the strategy $\manStrat_n$, where the man also has to avoid the lion $\lion_n$.
We first describe the intuition behind the strategy without specifying all details, and~later give a precise description. %using space instead of ~ causes a bug

Informally, the strategy $\manStrat_n$ is to attempt to run as according to the strategy $\manStrat_{n-1}$.
Thus, at any time $t$, the man's goal is to run toward the point $\goal(t)$.
However, the lion $\lion_n$ might prevent him from doing so.
Compared to $\manStrat_{n-1}$, the man has now increased his speed by $1+\ee_n-(1+\ee_{n-1})=2^{-n}\ee$, so he has time to take detours while still following the strategy $\manStrat_{n-1}$ approximately.
Note that the goal $\goal(t)$ that the man is attempting to reach will change at any point in time $t$ that is a multiple of $\deltaT_{n-1}/p$.

Assume that we have defined the man's strategy $\manStrat_n$ up to time $t$.
We use $\man(t)$ as shorthand for $\manStrat_n(t)$.
If he is close to the lion $\lion_n$, i.e., the distance $\|\man(t)-\lion_n(t)\|$ is close to $\rD$, for some small constant $\rD>0$ to be specified later, then he runs counterclockwise
around the lion, maintaining approximately distance $\rD$ to the lion.
He does so until he gets to a point where running directly toward $\goal(t)$ will not
decrease his distance to the lion.
He then escapes from the lion $\lion_n$, running directly toward $\goal(t)$.
Doing so, he can be sure that the lion cannot disturb him anymore until he reaches $\goal(t)$ or $\goal(t)$ has changed.

In order to bound the deviation between $\man(t)$ and $\manStrat_{n-1}$, and thus obtaining property~\ref{LEMMA:MAIN:PROP0a}, we choose $\rD$ so small that when the man is running around the lion $\lion_n$, we are in one of the following cases:
\begin{itemize}[leftmargin=1.6em]
\item
The lion is so close to $\goal(t)$ that the man is also within distance $\deriv_n/2$ from $\goal(t)$.
% , such that property~\ref{LEMMA:MAIN:PROP0a} will be satisfied.

\item
After running around the lion in a period of time no longer than $6\pi \rD/\ee_n$, the man escapes by running directly toward $\goal(t)$ without decreasing the distance to the lion.
By choosing $\rD$ sufficiently small, we can therefore limit the duration, and hence the length, of the detour that the lion can force the man to run, so that the man is ensured to be within distance $\deriv_n$ from the point $\manStrat_{n-1}(t)$ at any time $t$.
\end{itemize}

We now describe the details that make this idea work.
% Let
% $$C_n\mydef \min\left\{\deriv_n/2,\deriv_{n+1}/2,\min_{j\in\{1,\ldots,n-1\}} \frac{c_{jj}}{2^{n-j+1}}\right\},$$
% where $\deriv_i$ is the allowed deviation between $\manStrat_{i-1}$ and $\manStrat_i$, as stated in the lemma.
% Informally, $C_n$ is the distance that we allow the lions $\lion_1,\ldots,l_{n-1}$ to get closer to the man in the strategy $M_n$ as compared to $M_{n-1}$.
We define
$$\rD\mydef\min\left\{
\frac{\deltaT_{n-1}/p\cdot \ee_n(\ee_n-\ee_{n-1})}{2+2\ee_n+18\pi(1+\ee_n)},
\frac{\deriv_n/2\cdot \ee_n}{2+2\ee_n+12\pi(1+\ee_n)},
\|m_0-\lion_n(0)\|
\right\},$$
$$\rhoD\mydef 2\rD/\ee_n,$$
$$\theta\mydef\arccos\frac 1{1+\ee_n},$$
$$\varphi\in(0,\pi/2]\quad\text{so that}\quad\tan\theta=\frac{\rhoD \sin\varphi}{\rhoD\cos\varphi-2\rD},\quad\text{and}$$
%\begin{multline}
\begin{align*}
%\begin{gathered}
\deltaT_n>0 \quad\text{sufficiently small that} \quad  & 2\arcsin\frac{(2+\ee_n)\deltaT_n}{2(\rD-\deltaT_n)}+\frac{\deltaT_n}{\rhoD}\leq\varphi\quad\text{and}\\
& \deltaT_n<\frac \rD{3+\ee_n}. %,
%I first wrote ...{3+2\ee_n}, but I think the ``2'' wasn't necessary!?
%\quad\text{and}\quad
% \deltaT_{n-1}/\deltaT_n\in\NN.
%\end{multline}
% The condition that the ratio should be an integer has been removed. It seems to be of no use.
%\end{gathered}
\end{align*}

We note that $\varphi$ can be chosen since the function
$x\longmapsto \frac{\rhoD \sin x}{\rhoD\cos x-2\rD}$ is $0$ for $x=0$ and
tends to $+\infty$ as $\rhoD\cos x$ decreases to $2r$.
As for $\deltaT_n$, the function
$x\longmapsto 2\arcsin\frac{(1+\ee_n)x}{2(\rD-x)}+\frac{x}{\rhoD}$
is $0$ for $x=0$ and increases continuously, and hence $\deltaT_n$ can be chosen.

\begin{figure}
\centering
\begin{minipage}[t]{.3\textwidth}
  \centering
  \includegraphics[page=1]{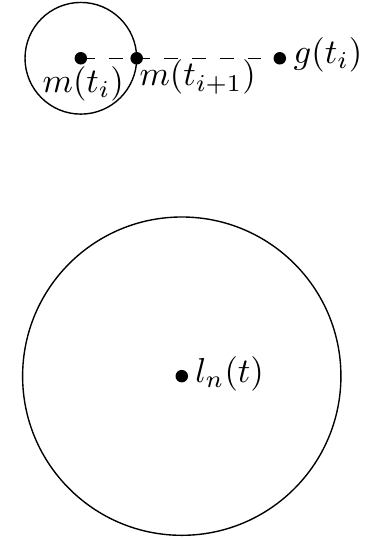}
  \captionof{figure}{A free move. The circles with centers
  $\man(t_i)$ and $\lion_n(t_i)$ have radii $(1+\ee_n)\deltaT_n$ and
  $\rD$, respectively.}
  \label{freeMove}
\end{minipage}\hspace{0.04\textwidth}%
\begin{minipage}[t]{.3\textwidth}
  \centering
  \includegraphics[page=2]{ipeFigs.pdf}
\captionof{figure}{An escape move. The man runs to $b$.}
\label{escapeMove}
\end{minipage}\hspace{0.04\textwidth}
\begin{minipage}[t]{.3\textwidth}
  \centering
  \includegraphics[page=3]{ipeFigs.pdf}
  \captionof{figure}{An avoidance move. The man runs to $q$.}
  \label{avoidanceMove}
\end{minipage}%
\end{figure}

Define a point in time $t$ to be a \emph{time of choice} if
$t$ has the form $t_i\mydef i\deltaT_n$ for $i\in\NN_0$.
At any time of choice $t_i$, the man chooses the point $\man(t_{i+1})$ at distance $(1+\ee_n)\deltaT_n$ from his current position $\man(t_i)$, as given by the following strategy:
\begin{enumerate}[leftmargin=1.6em,label=\textbf{\Alph*}]
\item\label{choice:a}
Suppose first that $\|\man(t_i)-\lion_n(t_i)\|\geq \rD+\deltaT_n(1+\ee_n)$.
Then the man chooses the
direction directly toward $\goal(t_i)$.
In the exceptional case that $\man(t_i)=\goal(t_i)$, i.e., if the man is standing at the goal he is pursuing, he chooses an arbitrary direction.\footnote{It might seem counterintuitive to run away from $\goal(t_i)$ in this case instead of, say, stay in place until $t_{i+1}$. However, it makes the subsequent analysis simpler that the man is always moving at full speed $1+\ee_n$.}
See Figure~\ref{freeMove}.

\item\label{choice:b}
Suppose now that $\|\man(t_i)-\lion_n(t_i)\|< \rD+\deltaT_n(1+\ee_n)$ and
$\man(t_i)\neq \goal(t_i)$.
Let $b$ be the point at distance $(1+\ee_n)\deltaT_n$ from $\man(t_i)$ in the direction
toward $\goal(t_i)$.
If there exist two parallel lines $W_0$ and $W_1$ such that
$\man(t_i)\in W_0$, $b\in W_1$, $\dist(\lion_n(t_i),W_0)\geq \rD-\deltaT_n$,
and $\dist(\lion_n(t_i),W_1)\geq \dist(\lion_n(t_i),W_0)+\deltaT_n$,
then the man runs to $b$.
The man will increase his distance from the lion in a step of this type.
See Figure~\ref{escapeMove}.

\item\label{choice:c}
In the remaining cases,
the circles $C(\man(t_i),\deltaT_n(1+\ee_n))$
and $C(\lion_n(t_i),\rD)$ intersect at two points $p$ and $q$ such that
the arc on $C(\lion_n(t_i),\rD)$ from $p$ counterclockwise to $q$ is in the interior of
$C(\man(t_i),\deltaT_n(1+\ee_n))$. The man then runs toward the point
$q$.
See Figure~\ref{avoidanceMove}.
\end{enumerate}

A move defined by case~\ref{choice:a},
\ref{choice:b}, or \ref{choice:c} is
called a
\emph{free move}, an \emph{escape move}, or an \emph{avoidance move},
respectively.
Let \emph{move $i$} be the move that the man makes during the interval
$[t_i,t_{i+1})$.% The proof of the following claim is in appendix~\ref{fastmanApp}.

\begin{claim}\label{claim1}
At any time of choice $t_i$, we have that
$\|\man(t_i)-\lion_n(t_i)\|\geq \rD-\deltaT_n$
and if the preceding move was an avoidance move, we also have that
$\|\man(t_i)-\lion_n(t_i)\|\leq \rD+\deltaT_n.$
Furthermore, at an arbitrary point in time $t\in [t_{i-1},t_i]$ and any point
$\man'\in \man([t_{i-1},t_i])$ we have that
$0<\rD-(3+\ee_n)\deltaT_n\leq\|\man'-\lion_n(t)\|$
and
if move $i-1$ was an avoidance move then additionally
$\|\man'-\lion_n(t)\|\leq \rD+(3+\ee_n)\deltaT_n.$
\end{claim}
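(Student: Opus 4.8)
The plan is to prove the claim by induction on $i$, carrying along the full statement of the claim as the inductive hypothesis and simultaneously verifying that the strategy is well-defined (the only real concern being that in case~\ref{choice:c} the two circles $C(\man(t_i),(1+\ee_n)\deltaT_n)$ and $C(\lion_n(t_i),\rD)$ genuinely meet in two distinct points). Concretely, the inductive statement is: for every $i\in\NN_0$, $\|\man(t_i)-\lion_n(t_i)\|\geq \rD-\deltaT_n$; if $i\geq 1$ and move $i-1$ was an avoidance move then also $\|\man(t_i)-\lion_n(t_i)\|\leq \rD+\deltaT_n$; and if $i\geq 1$, then for every $t\in[t_{i-1},t_i]$ and $\man'\in\man([t_{i-1},t_i])$ we have $0<\rD-(3+\ee_n)\deltaT_n\leq\|\man'-\lion_n(t)\|$, with the extra bound $\|\man'-\lion_n(t)\|\leq \rD+(3+\ee_n)\deltaT_n$ when move $i-1$ was an avoidance move. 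The strict positivity $\rD-(3+\ee_n)\deltaT_n>0$ is precisely the second constraint imposed on $\deltaT_n$, so it is free.

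The base case $i=0$ is immediate: $\|\man(t_0)-\lion_n(t_0)\|=\|\man_0-\lion_n(0)\|\geq\rD$ because the definition of $\rD$ is a minimum that includes $\|\man_0-\lion_n(0)\|$, and every other assertion is vacuous. For the inductive step I assume the statement for $i$, analyze move $i$, and deduce the statement for $i+1$. Throughout I use that during $[t_i,t_{i+1}]$ the man's path is a straight segment of length $(1+\ee_n)\deltaT_n$ and the lion moves by at most $\deltaT_n$, so $\|\man'-\man(t_i)\|\leq(1+\ee_n)\deltaT_n$ for $\man'\in\man([t_i,t_{i+1}])$ and $\|\lion_n(t)-\lion_n(t_i)\|\leq\deltaT_n$ for $t\in[t_i,t_{i+1}]$. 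Case~\ref{choice:a} (free move): here $\|\man(t_i)-\lion_n(t_i)\|\geq\rD+(1+\ee_n)\deltaT_n$ by the defining condition, so two applications of the triangle inequality give $\|\man'-\lion_n(t)\|\geq\rD-\deltaT_n$ on the whole interval, which supplies both the endpoint bound at $t_{i+1}$ and the interval bound (and move $i$ is not an avoidance move, so no upper bound is needed). Case~\ref{choice:b} (escape move): since $\man(t_i)\in W_0$ we get $\|\man(t_i)-\lion_n(t_i)\|\geq\dist(\lion_n(t_i),W_0)\geq\rD-\deltaT_n$, which yields the interval bound as before; and since $\man(t_{i+1})=b\in W_1$ and $x\mapsto\dist(x,W_1)$ is $1$-Lipschitz, $\|\man(t_{i+1})-\lion_n(t_{i+1})\|\geq\dist(\lion_n(t_{i+1}),W_1)\geq\dist(\lion_n(t_i),W_1)-\deltaT_n\geq\dist(\lion_n(t_i),W_0)\geq\rD-\deltaT_n$, again with no upper bound required. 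Case~\ref{choice:c} (avoidance move): first, the inductive hypothesis $\|\man(t_i)-\lion_n(t_i)\|\geq\rD-\deltaT_n$ together with $(1+\ee_n)\deltaT_n<\rD$ (a consequence of $\deltaT_n<\rD/(3+\ee_n)$) gives $\rD-(1+\ee_n)\deltaT_n<\|\man(t_i)-\lion_n(t_i)\|$, while "not case~\ref{choice:a}" gives $\|\man(t_i)-\lion_n(t_i)\|<\rD+(1+\ee_n)\deltaT_n$; hence the two circles meet in two distinct points and the move is well-defined, with $\man(t_{i+1})=q$ and $\|\man(t_{i+1})-\lion_n(t_i)\|=\rD$. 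Then $\|\man(t_{i+1})-\lion_n(t_{i+1})\|\in[\rD-\deltaT_n,\rD+\deltaT_n]$ directly, and for $\man'\in\man([t_i,t_{i+1}])$, $t\in[t_i,t_{i+1}]$, bounding via $q$ and $\lion_n(t_i)$ gives $\rD-(2+\ee_n)\deltaT_n\leq\|\man'-\lion_n(t)\|\leq\rD+(2+\ee_n)\deltaT_n$, which lies inside the required window.

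Most of this is bookkeeping via the triangle inequality. The one place that needs an actual idea is the escape move, where one has to see that the geometric configuration of parallel lines $W_0,W_1$ imposed by case~\ref{choice:b} is set up exactly so that anchoring at $W_0$ controls the man throughout the move while anchoring at $W_1$ controls him at its end; the other slightly non-obvious point is the well-definedness check in case~\ref{choice:c}, which is the only spot where the smallness of $\deltaT_n$ relative to $\rD$ is genuinely needed. I do not expect to use any of the finer parameter choices ($\rhoD$, $\theta$, $\varphi$, or the first constraint on $\deltaT_n$) for this claim; those should only enter later, when bounding the number and total length of avoidance moves.
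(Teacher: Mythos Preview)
Your proof is correct and follows essentially the same route as the paper: induction on $i$, a case split over free/escape/avoidance moves, and triangle-inequality bookkeeping from the time-of-choice bounds to the interval bounds. Your treatment of the escape move bounds $\|\man(t_{i+1})-\lion_n(t_{i+1})\|$ via $\dist(\lion_n(\cdot),W_1)$ directly, whereas the paper introduces an auxiliary point $b'=W_0\cap\lion_n(t_{i-1})\man(t_i)$ to first show $\|\man(t_i)-\lion_n(t_{i-1})\|\geq\rD$; and you additionally make explicit the well-definedness of case~\ref{choice:c} (the two circles actually meet), which the paper leaves implicit---both minor differences, not a change of approach.
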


\begin{proof}
We prove the claim by induction on $i$. It clearly holds for $i = 0$ so assume that the claim holds for $i-1$. If move $i-1$ was a free move, then we have
\begin{align*}
\|\man(t_i)-\lion_n(t_i)\|&\geq
\|\man(t_{i-1})-\lion_n(t_{i-1})\|-(2+\ee_n)\deltaT_n\\
&\geq
\rD+\deltaT_n(1+\ee_n)-(2+\ee_n)\deltaT_n=\rD-\deltaT_n.
\end{align*}

If move $i-1$ was an escape move, then let $W_0$ and $W_1$ be lines containing $\man(t_{i-1})$ and $\man(t_i)$, respectively, as defined in~\ref{choice:b}.
Let $b'$ be the intersection of $W_0$ and the segment $\lion_n(t_{i-1})\man(t_i)$.
We then have
\begin{align*}
\|\man(t_i)-\lion_n(t_{i-1})\| & =
\|\man(t_i)-b'\|+\|b'-\lion_n(t_{i-1})\|\\
& \geq \dist(W_0,W_1)+\dist(\lion_n(t_{i-1}),W_0) \\
& \geq \deltaT_n+(\rD-\deltaT_n)=\rD.
\end{align*}
%we have
%$$\|\man(t_i)-\lion_n(t_{i-1})\|\geq
%\|\man(t_{i-1})-\lion_n(t_{i-1})\|+\deltaT_n.$$
Hence
$$
\|\man(t_i)-\lion_n(t_i)\|\geq \|\man(t_i)-\lion_n(t_{i-1})\|-\deltaT_n\\
\geq \rD-\deltaT_n.
$$

If move $i-1$ was an avoidance move, we have
$$\|\man(t_i)-\lion_n(t_i)\|\geq \|\man(t_i)-\lion_n(t_{i-1})\|-\deltaT_n
= \rD-\deltaT_n
$$
and, similarly,
$$\|\man(t_i)-\lion_n(t_i)\|\leq \|\man(t_i)-\lion_n(t_{i-1})\|+\deltaT_n
= \rD+\deltaT_n.
$$

Since at a point of choice $t_{i-1}$ we have $\rD-\deltaT_n\leq\|\man(t_{i-1})-\lion_n(t_{i-1})\|$ and the lion and the man can move at most $(2+\ee_n)\deltaT_n$ closer to each other within $\deltaT_n$ time, we have for any point in time $t\in[t_{i-1},t_i]$ and any point $\man'\in \man([t_{i-1},t_i])$ that
\begin{align*}
\rD-(3+\ee_n)\deltaT_n\leq
\|\man(t_{i-1})-\lion_n(t_{i-1})\|-(2+\ee_n)\deltaT_n\leq
\|\man'-\lion_n(t)\|.
\end{align*}

If move $i-1$ was an avoidance move, we have $\|\man(t_{i-1})-\lion_n(t_{i-1})\|\leq \rD+\deltaT_n$, and hence for any point in time $t\in[t_{i-1},t_i]$ and any point $\man'\in \man([t_{i-1},t_i])$ that
\begin{align*}
\|\man'-\lion_n(t)\|
\leq
\|\man(t_{i-1})-\lion_n(t_{i-1})\|+(2+\ee_n)\deltaT_n
\leq
\rD+(3+\ee_n)\deltaT_n.
\end{align*}

\end{proof}

One might fear that the lion $\lion_n$ could repeatedly disturb the man and force him to make avoidance moves so that he would never approach $\goal(t)$.
Luckily, the following claim states that if he is able to make an escape move, he will not be distrubed again until he reaches $\goal(t)$ or $\goal(t)$ changes.

\begin{claim}\label{claim2}
An avoidance move is succeeded by an avoidance move or an escape
move. When the man makes an escape move, he will not make an avoidance move before
he reaches $\goal(t)$ or $\goal(t)$ changes.
\end{claim}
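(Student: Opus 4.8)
The plan is to track, across a time of choice $t_i$, the signed progress of the man toward the lion and the geometric configuration that triggers an escape move. First I would establish the easy half: if move $i-1$ is an avoidance move, then move $i$ cannot be a free move. Indeed, Claim~\ref{claim1} gives $\|\man(t_i)-\lion_n(t_i)\|\leq \rD+\deltaT_n < \rD+\deltaT_n(1+\ee_n)$, so case~\ref{choice:a} does not apply; hence move $i$ is an escape move or an avoidance move. This disposes of the first sentence of the claim. The second sentence — the persistence of the escape — is the substantive part.

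For the second sentence, suppose move $j$ is an escape move, with lines $W_0\ni\man(t_j)$, $W_1\ni\man(t_{j+1})$ as in case~\ref{choice:b}, oriented so that the man is moving in the direction of increasing distance from $\lion_n$. The key quantity is $h(t)\mydef$ the signed distance from $\lion_n(t)$ to the line through $\man(t)$ perpendicular to the fixed direction $u\mydef(\goal(t_j)-\man(t_j))/\|\goal(t_j)-\man(t_j)\|$ — equivalently the component of $\man(t)-\lion_n(t)$ along $u$. The defining inequalities of case~\ref{choice:b} say precisely that $h(t_j)\geq \rD-\deltaT_n$ and $h(t_{j+1})\geq h(t_j)+\deltaT_n$, i.e. the man gains at least $\deltaT_n$ of $u$-clearance over this step while the lion loses at most its $\deltaT_n$ of travel. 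Since in a free move the man runs exactly along $u$ (gaining $(1+\ee_n)\deltaT_n$ in $u$-clearance) and the lion can erode at most $\deltaT_n$ of it, $h$ is nondecreasing along free moves, and in fact $h(t_{i+1})\geq h(t_i)+\ee_n\deltaT_n$ as long as the man keeps running toward the same $\goal$. I would show that while $\goal(t)$ is unchanged and the man has not yet reached it, $h(t_i)\geq \rD-\deltaT_n$ is maintained, and moreover the full separating-strip condition of case~\ref{choice:b} persists: with $b$ the candidate endpoint, take $W_0$ through $\man(t_i)$ and $W_1$ through $b$ both perpendicular to $u$; then $\dist(\lion_n(t_i),W_0)=h(t_i)\geq \rD-\deltaT_n$ and $\dist(\lion_n(t_i),W_1)=h(t_i)+(1+\ee_n)\deltaT_n\geq \dist(\lion_n(t_i),W_0)+\deltaT_n$ since $\ee_n>0$. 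Hence case~\ref{choice:b} fires again: move $i$ is again an escape move (or we have already reached $\goal$ or $\goal$ has changed), and by induction the man keeps escaping until one of the two terminating events occurs.

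The delicate point — and the step I expect to be the main obstacle — is ruling out that the man's distance from the lion, while the man runs toward $\goal$, could temporarily grow so large that he then loops back (so $u$-clearance is not literally what governs case~\ref{choice:b} once $\man$ has passed $\goal$); but this cannot happen before $\goal$ changes or is reached, which is exactly the stopping condition in the claim, so the induction only needs to run over that window. A second subtlety is the interaction with the discretization: at the time of choice $t_i$ inside an escape run, $\man(t_i)$ need not lie exactly on the line through $\man(t_{j})$ perpendicular to... rather, I track $h$ as the $u$-component directly, which is affine and so behaves cleanly under the piecewise-linear motion, and I invoke $\deltaT_n<\rD/(3+\ee_n)$ from the choice of parameters to keep all the $O(\deltaT_n)$ slack terms from swamping the $\ee_n\deltaT_n$ gain. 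Once $h(t_i)\geq \rD-\deltaT_n$ and the strip condition are shown to be invariant, the claim follows immediately.
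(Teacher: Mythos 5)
Your first paragraph is exactly the paper's argument for the first sentence of the claim and is correct. The second part, however, has a genuine gap at its base case. You assert that the defining inequalities of case~\ref{choice:b} ``say precisely'' that $h(t_j)\geq \rD-\deltaT_n$, where $h$ is the component of $\man(t_j)-\lion_n(t_j)$ along $u$, the unit vector toward $\goal$. They do not: case~\ref{choice:b} only requires the \emph{existence} of some pair of parallel lines $W_0,W_1$ with the stated distance properties, and their common unit normal $v$ need not equal $u$ --- it only has to satisfy $\langle u,v\rangle\geq 1/(1+\ee_n)$ so that the displacement to $b$ projects to at least $\deltaT_n$ on $v$. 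Concretely, if $\lion_n(t_j)=\man(t_j)-(\rD-\deltaT_n)w$ for a unit vector $w$ at angle $\theta=\arccos\frac{1}{1+\ee_n}$ from $u$, then the line pair with normal $w$ witnesses an escape move, yet $h(t_j)=(\rD-\deltaT_n)\cos\theta<\rD-\deltaT_n$. So your invariant can fail already at the initial escape move; and since $h$ gains only $\ee_n\deltaT_n$ per step, it can remain below $\rD-\deltaT_n$ for on the order of $\rD/\deltaT_n$ subsequent times of choice. During all of those steps your perpendicular-to-$u$ strip is not a valid witness, and nothing in your argument excludes that no witness exists at all, i.e., that case~\ref{choice:c} fires and the man is forced into an avoidance move.

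The repair is to track clearance along the normal of the \emph{original} witness pair rather than along $u$, which is what the paper does: it places the points $w_j=\man(t_i)+j(1+\ee_n)\deltaT_n\,u$ and takes $W_j$ to be the translate of $W_0$ through $w_j$. The second case-\ref{choice:b} inequality for the initial move forces consecutive translates to be at perpendicular distance at least $\deltaT_n$, so $\dist(\lion_n(t_i),W_j)\geq \rD+(j-1)\deltaT_n$, and after the lion travels for $j\deltaT_n$ time one still has $\dist(\lion_n(t_{i+j}),W_j)\geq \rD-\deltaT_n$; the pair $(W_j,W_{j+1})$ is then a witness whenever the man is close enough to the lion for case~\ref{choice:b} to be consulted, and otherwise the move is free. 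Your signed-clearance bookkeeping (man gains at least $\deltaT_n$ of clearance per step, lion erodes at most $\deltaT_n$) goes through verbatim once it is done along that fixed normal instead of along $u$; with that substitution your proof becomes the paper's.
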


\begin{proof}
Consider move $i$.
We know from Claim~\ref{claim1} that if move $i-1$ was an avoidance move, then
$\|\man(t_i)-\lion_n(t_i)\|\leq \rD+\deltaT_n<\rD+(1+\ee_n)\deltaT_n$,
so move $i$ cannot be a free move.

\begin{figure}
\centering
\includegraphics[page=5]{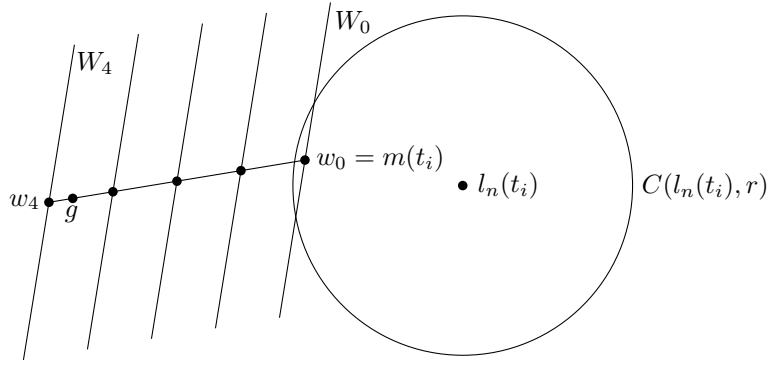}
\caption{The distance between two consecutive of
the parallel lines $W_0,\ldots,W_4$ is at least $\deltaT_n$, which
proves that the man runs from
$\man(t_i)=w_0$ to $w_4$ unless $g$ moves in the meantime.}
\label{claim3proof}
\end{figure}

For the second part of the statement, assume that move $i$ is an escape move.
Let $\goal\mydef \goal(t_i)$.
Let $w_0,\ldots,w_k$ be a sequence of points on the ray from $\man(t_i)$
with direction to $\goal$ such that
$w_0=\man(t_i)$, $\|w_0-w_j\|=j(1+\ee_n)\deltaT_n$,
and $k$ is minimum such that either $\goal\in w_{k-1}w_k$ or
$\goal(t')\neq g$ for some $t'\in [t_{i+k-1},t_{i+k}]$.
See Figure~\ref{claim3proof}.
Let $W_0$ and $W_1$ be the parallel lines defined in case~\ref{choice:b} for
move $i$. We define lines $W_j$ for $j\geq 2$ to be parallel
to $W_0$ and passing through $w_j$. We claim that for any
$j\in\{0,\ldots,k-1\}$, the man moves from $w_j$ to $w_{j+1}$ during
move $i+j$ by making either an escape move or a free move. We prove this by
induction on $j$. The claim holds for $j=0$ by assumption, so assume we have that
$\man(t_{i+j})=w_j$ and that move $i+j-1$ was an escape move or a free move.
Since the distance between consecutive lines $W_j$ and $W_{j+1}$ is
at least $\deltaT_n$, it follows that
$\dist(\lion_n(t_i),W_j)\geq \rD+(j-1)\deltaT_n$ and hence that
$\dist(\lion_n(t_{i+j}),W_j)\geq \rD-\deltaT_n$.
Now, if $\|\man(t_{i+j})-\lion_n(t_{i+j})\|< \rD+\deltaT_n(1+\ee_n)$,
then the lines $W_j$ and $W_{j+1}$ are a witness that move $i+j$ is an
escape move so that the man moves to $w_{j+1}$.
Otherwise, move $i+j$ is a free move, in which case the man likewise moves
to $w_{j+1}$. Finally, since $\goal(t)$ changes or the man reaches $\goal$
during move $i+k$, the statement holds.
\end{proof}

Define $\rhoD'\mydef \rhoD+\rD+(3+\ee_n)\deltaT_n$
and $\tau\mydef 6\pi \rD/\ee_n$.
Informally, $\tau$ is the time it takes for the man to run around the lion $\lion_n$ before he can escape toward $\goal(t)$.
This is made precise by the following claim.
%The proof of the following technical claim is in appendix~\ref{fastmanApp}.

\begin{claim}\label{claim3}
If move $i$ is an avoidance move,
one of the following three events occurs before $\tau$ time
has passed: (i) $\goal(t)$ changes, (ii) $\|\man(t)-\goal(t)\|<\rhoD'$, or (iii) the man makes an escape move.
\end{claim}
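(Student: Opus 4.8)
The idea is to track the evolution of the angular position of the man on the circle of radius $\rho$ around the lion during a maximal run of consecutive avoidance moves, and show that this angle increases at a definite rate, so that after $\tau=6\pi\rD/\ee_n$ time one full loop (and more) would have been completed — which is impossible without one of the three events (i)--(iii) intervening. First I would invoke Claim~\ref{claim2}: once move $i$ is an avoidance move, every subsequent move is an avoidance move until an escape move occurs; so it suffices to assume, for contradiction, that moves $i,i+1,\ldots$ are all avoidance moves for the full duration $\tau$ without $\goal(t)$ changing and without ever getting within $\rho'$ of $\goal(t)$, and derive a contradiction.

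**Key steps.** Set up polar coordinates centered at the (moving) lion $\lion_n(t)$, and for each time of choice $t_j$ in the run, let $\alpha_j$ be the angle of $\man(t_j)-\lion_n(t_j)$. By Claim~\ref{claim1}, during such a run we have $\rD-\deltaT_n\le \|\man(t_j)-\lion_n(t_j)\|\le \rD+\deltaT_n$, so the man stays in a thin annulus around radius $\rD$. The geometry of case~\ref{choice:c} — the man jumps to the intersection point $q$ obtained by going counterclockwise along $C(\lion_n(t_j),\rD)$ — forces the angular increment $\alpha_{j+1}-\alpha_j$ to be at least some explicit positive quantity; a chord of length roughly $(1+\ee_n)\deltaT_n$ on a circle of radius $\approx\rD$ subtends an angle $\approx (1+\ee_n)\deltaT_n/\rD$, and the displacement of the lion by at most $\deltaT_n$ only perturbs this by a lower-order amount (this is exactly why $\deltaT_n$ was chosen with $2\arcsin\frac{(2+\ee_n)\deltaT_n}{2(\rD-\deltaT_n)}+\frac{\deltaT_n}{\rhoD}\le\varphi$ and $\deltaT_n<\rD/(3+\ee_n)$). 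Summing these increments over the $\lfloor\tau/\deltaT_n\rfloor$ moves in the run, the total angular sweep exceeds $2\pi$: indeed the per-step angular gain is on the order of $\ee_n\deltaT_n/\rD$ up to constants, and $\tau/\deltaT_n$ steps then give a sweep $\gtrsim 6\pi \rD/\ee_n \cdot \ee_n/\rD = 6\pi > 2\pi$, with the constant $6\pi$ (rather than a tighter $2\pi$) absorbing the lower-order perturbations from the lion's motion and from the annulus thickness. But a man confined to a thin annulus around the lion who sweeps more than $2\pi$ of angle must, at some time of choice, be positioned so that running straight toward $\goal(t)$ does not decrease his distance to the lion — which is precisely the escape-move condition~\ref{choice:b} — unless $\goal(t)$ itself is inside (or within $\rho'$ of) that annulus, i.e. $\|\man(t)-\goal(t)\|<\rho'$, or $\goal(t)$ changed. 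This contradicts our assumption, completing the proof.

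**Main obstacle.** The delicate part is converting "swept more than $2\pi$ of angle around the lion while staying near radius $\rD$" into "an escape move becomes available." The cleanest route is: among the directions from a point near radius $\rD$ toward a target $\goal$ with $\|\man-\goal\|\ge\rho'$, the escape condition~\ref{choice:b} fails only when the direction to $\goal$ points (roughly) radially inward toward the lion, i.e. lies in a bounded angular window; the width of this window is controlled via $\theta=\arccos\frac1{1+\ee_n}$ and $\varphi$ (this is the role of the definition of $\varphi$ through $\tan\theta = \frac{\rhoD\sin\varphi}{\rhoD\cos\varphi-2\rD}$, and why $\rho'=\rhoD+\rD+(3+\ee_n)\deltaT_n$ appears — it ensures the target is far enough out that the "bad" angular window has width at most, say, $2\varphi<\pi$). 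Since the man's angular position sweeps through the entire circle (more than $2\pi$) while $\goal$ stays fixed, it must at some point of choice exit this bad window, and there the escape move is triggered. I expect the bookkeeping to confirm that the slack built into $\rD$ (the two competing bounds in its definition) and into $\deltaT_n$ makes each of these perturbation estimates go through with room to spare; no genuinely new idea is needed beyond the angular-sweep argument, but the constants must be matched carefully against the earlier definitions.
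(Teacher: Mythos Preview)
Your overall strategy---an angular-sweep argument around the moving lion---is the paper's approach, and your identification of the roles of $\theta$, $\varphi$, and the escape condition is correct. But there is a real gap in the final step. You track only the angle $\alpha_j$ of $\man(t_j)-\lion_n(t_j)$ and argue that since it sweeps through more than $2\pi$ ``while $\goal$ stays fixed,'' the man must exit the bad angular window. The point $\goal$ is indeed fixed, but the bad window is not: it is determined by the direction from the \emph{moving} lion $\lion_n(t)$ to $\goal$, and so it rotates as the lion moves. Over a run of length $\tau=6\pi\rD/\ee_n$ the lion can travel that far, and at distance $\rhoD=2\rD/\ee_n$ from $\goal$ this can produce an angular drift of up to $\tau/\rhoD=3\pi$ in the window---comparable to the man's own sweep. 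So ``man sweeps $>2\pi$'' alone does not force an exit.

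The paper closes this gap by tracking \emph{two} angles, $\xi(t)$ (lion to man) and $\eta(t)$ (lion to goal), and showing that the \emph{difference} $\xi-\eta$ increases by more than $2\pi$. Bounding $\eta(t_{j+1})-\eta(t_j)\le\deltaT_n/\rhoD$ requires $\|\lion_n(t)-\goal\|\ge\rhoD$, and this is exactly where event~(ii) and the choice $\rhoD'=\rhoD+\rD+(3+\ee_n)\deltaT_n$ enter: if $\|\man(t)-\goal\|\ge\rhoD'$ then Claim~\ref{claim1} forces the lion at least $\rhoD$ from $\goal$. (So $\rhoD'$ controls the \emph{rotation rate} of the bad window, not its width as you suggest.) Likewise, the constraint $2\arcsin\frac{(2+\ee_n)\deltaT_n}{2(\rD-\deltaT_n)}+\frac{\deltaT_n}{\rhoD}\le\varphi$ is not about the sweep rate; it is used at the very end to pass from the continuous instant $t$ where $\xi(t)\equiv\eta(t)\pmod{2\pi}$ to the nearest time of choice $t_j$, guaranteeing that $|\xi(t_j)-\eta(t_j)|\le\varphi$ still holds there and hence that move $j$ is an escape move.
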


\begin{figure}
\centering
\includegraphics[page=9]{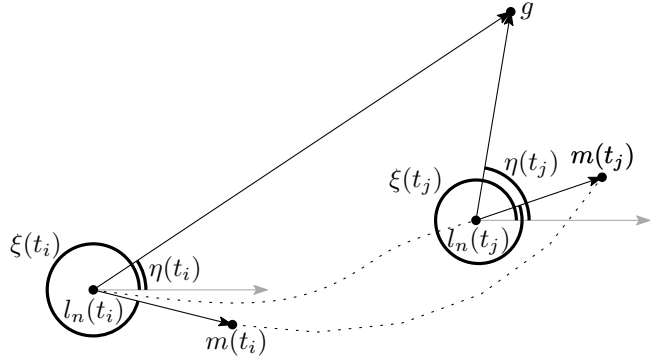}
\caption{The thick arcs show the angles $\eta$ and $\xi$ at two different times $t_i$ and $t_j$, where $t_i<t_j$, when the lion $\lion_n$ and the man $m$ are running along the dotted paths.
Here, $\xi(t_j)>2\pi$, as indicated by the arc making more than one full revolution around $\lion_n(t_j)$.}
\label{etaXiFig}
\end{figure}

\begin{proof}
Let $\goal\mydef \goal(t_i)$.
We first present an informal description of the proof.
Since move $i$ is an avoidance move, we know from Claim~\ref{claim2} that the man keeps making avoidance moves until he makes an escape move.
We show that if the first two events do not occur, he will make an escape move during the interval $[t_i,t_i+\tau]$.
Let $\xi(t)$, respectively~$\eta(t)$, denote the angle of the vector $\overrightarrow{\lion_n(t)\man(t)}$, respectively~$\overrightarrow{\lion_n(t)g}$.
A key observation is that if the difference in these angles is small, then the man makes an escape move since then the lion and the goal $\goal$ are roughly on opposite sides of the man.
Showing that this difference eventually becomes small involves showing that $\xi$ increases by at least $2\pi$ more than $\eta$ after $\tau$ time so that at some point in time $t\in[t_i,t_i+\tau]$, the vectors $\overrightarrow{\lion_n(t)\man(t)}$ and $\overrightarrow{\lion_n(t)g}$ have the same orientation.
By Claim~\ref{claim1}, the lion $\lion_n$ never gets closer than $\rho$ to $g$ which implies that the change in $\eta$ is small in any time interval $[t_j,t_{j+1}]$.
Since the man keeps a minimum distance from the lion, it similarly follows that the change in $\xi$ is small in $[t_j,t_{j+1}]$.
Picking $j$ to be the maximum such that $t_j\le t$ gives $t - t_j\le\deltaT_n$, which implies that the difference in the two angles is small at time $t_j$, at which point the man makes an escape move.
Since $t_j \le t + \tau$, the claim follows.

We now proceed with the proof of the claim.
Since move $i$ is an avoidance move, we know from Claim~\ref{claim2} that the man keeps making avoidance moves until he makes an escape move.
Assume that neither the first nor the second event occurs
before $\tau$ time has passed.
% We therefore know by Claim~\ref{claim2} that the man keeps making avoidance moves during that time.
By Claim~\ref{claim1}, we know that
for any $t\in[t_i,t_i+\tau]$, we have that
\begin{align}\label{ineq:liondist}
\|\lion_n(t)-\goal\| & \geq \|\man(t)-\goal\|-\|\man(t)-\lion_n(t)\| \nonumber \\
& \geq \rhoD'-(r+(3+\ee_n)\deltaT_n)= \rhoD. 
\end{align}
Since the distances from $\lion_n$ to $\man$ and to $\goal$ are positive, there exist continuous functions
$\xi,\eta\colon[t_i,t_i+\tau]\longrightarrow\RR$
that measure the angle from $\lion_n$ to $\man$ and to $\goal$,
respectively; see Figure~\ref{etaXiFig}.
These are defined so that for any time $t\in[t_i,t_i+\tau]$, we have
\begin{align*}
\man(t)&=\lion_n(t)+\|\lion_n(t)-\man(t)\|\cdot (\cos \xi(t),\sin \xi(t))\quad\text{and} \\
\goal&=\lion_n(t)+\|\lion_n(t)-\goal\|\cdot (\cos \eta(t),\sin \eta(t)).
\end{align*}

\begin{figure}
\centering
\includegraphics[page=6]{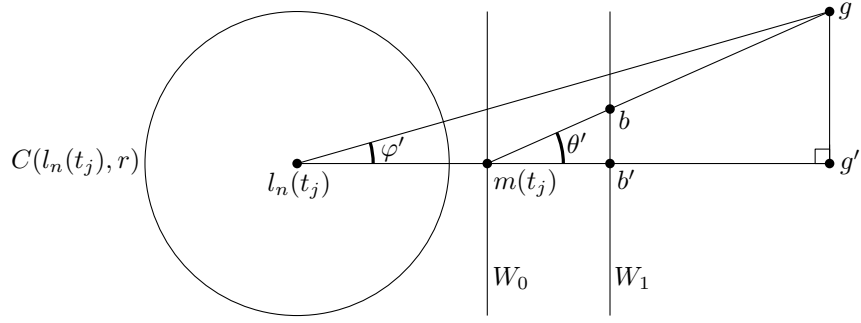}
\caption{Situation of the proof of Claim~\ref{claim3}.}
\label{phiThetaFig}
\end{figure}

Consider an arbitrary time of choice $t_j$ where
$\|\man(t_j)-\lion_n(t_j)\|< \rD+\deltaT_n(1+\ee_n)$,
so that move $j$ is either an escape move or an avoidance move.
Let $\theta'$ be the counterclockwise angle from the direction
$\overrightarrow{\lion_n(t_j)\man(t_j)}$ to $\overrightarrow{\man(t_j)\goal}$.
We now prove that a sufficient condition for move $j$ to be an escape move
is that
$\theta'\leq\theta=\arccos\frac 1{1+\ee_n}$.
See Figure~\ref{phiThetaFig}, and let the point $b$ and the point on the segment $\man(t_j)g$ at distance $\deltaT_n(1+\ee_n)$ from $\man(t_j)$ be as defined in case~\ref{choice:b}.
Consider the two lines $W_0$ and $W_1$
perpendicular to $\lion_n(t_j)\man(t_j)$ through
$\man(t_j)$ and $b$, respectively.
We claim that $W_0$ and $W_1$ are a witness that the man makes an escape move to $b$.
This follows as
\begin{align*}
\dist(\lion_n(t_j),W_0) & =\|\lion_n(t_j)-\man(t_j)\|\geq r-\deltaT_n\quad\text{and}\\
\dist(\lion_n(t_j),W_1) & =\dist(\lion_n(t_j),W_0)+\|\man(t_j)-b'\| \\
& \geq \dist(\lion_n(t_j),W_0)+\cos\theta'\cdot \|\man(t_j)-b\| \\
& \geq \dist(\lion_n(t_j),W_0)+\cos\theta\cdot (1+\ee_n)\deltaT_n \\
& = \dist(\lion_n(t_j),W_0)+\deltaT_n.
\end{align*}

Next, we show that if the difference in the angles
from $\lion_n$ to $\man$ and to $\goal$, respectively, is at most
$\varphi$, then $\theta'\leq\theta$, so the man makes an escape move by the above.
To put it another way, if
\begin{align}\label{xieta}
\lvert \eta(t_j)-\xi(t_j)-2z\pi\rvert\leq \varphi
\end{align}
for some
$z\in\ZZ$, then $\theta'\leq\theta$.
To see this, assume without loss of
generality that inequality~\eqref{xieta} holds for $z=0$,
let $\varphi'\mydef\eta(t_j)-\xi(t_j)$, and consider the case
$0\leq\varphi'\leq\varphi$.
The case $0\geq\varphi'\geq-\varphi$ is analogous.
Let $g'$ be the projection of $g$ onto the line through $\lion_n(t_j)$ and $\man(t_j)$.
We then have
\begin{align*}
\tan \theta'&=
\frac{\|g-g'\|}{\|\man(t_j)-g'\|}=
\frac{\|\lion_n(t_j)-g\|\sin\varphi'}
{\|\lion_n(t_j)-g\|\cos\varphi'-\|\lion_n(t_j)-\man(t_j)\|}.
\end{align*}

Observe that $\tan\theta'$, and hence $\theta'$, are maximum when $\varphi'$ and $\|\lion_n(t_j)-\man(t_j)\|$
are maximum and $\|\lion_n(t_j)-g\|$ is minimum, i.e., when
$\varphi'=\varphi$,
$\|\lion_n(t_j)-\man(t_j)\|=\rD+\deltaT_n$, and
$\|\lion_n(t_j)-g\|=\rhoD$. We therefore get
\begin{align*}
\tan \theta'&\leq
\frac{\rhoD\sin\varphi}
{\rhoD\cos\varphi-(\rD+\deltaT_n)}\leq
\frac{\rhoD\sin\varphi}
{\rhoD\cos\varphi-2\rD}=\tan \theta.
\end{align*}
It now follows that $\theta'\leq\theta$, so indeed, move $j$ is an escape move.

In the following, we prove that there is some time of choice
$t'_j$ in the interval
$[t_i,t_i+\tau]$ for which the condition~\eqref{xieta} is satisfied, i.e.,
condition~\eqref{xieta} is true when $t'_j$ is substituted for $t_j$.

\begin{figure}
\centering
\begin{minipage}[t]{.475\textwidth}
  \centering
  \includegraphics[page=7]{ipeFigs.pdf}
  \captionof{figure}{The angle $\alpha$ is $\xi(t_{j+1})-\xi(t_j)$.
When
$\|\man'(t_j)-\man'(t_{j+1})\|=(2+\ee_n)\deltaT_n$, then
$\alpha=2\arcsin\frac{(2+\ee_n)\deltaT_n}{2(\rD-\deltaT_n)}$.}
\label{xixiFig}
\end{minipage}\hspace{0.049\textwidth}%
\begin{minipage}[t]{.475\textwidth}
  \centering
  \includegraphics[page=8]{ipeFigs.pdf}
\captionof{figure}{The man runs the path $\man(t_j)q_jq_{j+1}$. The angle from $q'_{j+1}$ to
$q_{j+1}$ on $C(\lion_n(t_{j+1}),\rD)$ is at least $\ee_n\deltaT_n$.}
\label{angleIncrease}
\end{minipage}\hspace{0.04\textwidth}
\end{figure}

First, we show that for an arbitrary time of choice $t_j$ we have that
\begin{align}\label{maxAngleStep}
\xi(t_{j+1})-\xi(t_j)\leq
2\arcsin\frac{(2+\ee_n)\deltaT_n}{2(\rD-\deltaT_n)}.
\end{align}
To see this, %we first observe that we can restrict ourselves to the case $t=t_{j+1}$, since, clearly, the maximum value of $\xi(t)-\xi(t_j)$ is attained for $t=t_{j+1}$.
define, for $t\in[t_j,t_{j+1}]$,
$$
\lion'_n(t)\mydef\lion_n(t_j)\quad\text{and}\quad
\man'(t)\mydef\man(t)+(\lion_n(t)-\lion_n(t_j)),
$$
i.e., we fix the lion $\lion'_n$ at the point $\lion_n(t_j)$
and let the man $\man'$ run for both so that the segment
$\lion'_n(t)\man'(t)$ is a translation of $\lion_n(t)\man(t)$.
It follows that the man $\man'$ runs at speed at most $2+\ee_n$.
He runs from one point $\man'(t_j)$ to another $\man'(t_{j+1})$, both of which are on or in the exterior of the circle $C(\lion'_n(t_j),\rD-\deltaT_n)$, and $\xi(t_{j+1})-\xi(t_j)$ measures the difference in angles from $\lion'_n(t_j)$ to the two points.
The difference of the angles is therefore maximal if the man $\man'$ runs at full speed $2+\ee_n$ along a straight line segment between two points on $C(\lion'_n(t_j),\rD-\deltaT_n)$.
In other words, $\xi$ cannot increase more on $[t_j,t_{j+1}]$ than in the case that
$\|\lion'_n(t_j)-\man'(t_j)\|=\|\lion'_n(t_{j+1})-\man'(t_{j+1})\|=\rD-\deltaT_n$
and $\|\man'(t_j)-\man'(t_{j+1})\|=(2+\ee_n)\deltaT_n$. From this observation,
inequality~\eqref{maxAngleStep} follows from an elementary argument; see
Figure~\ref{xixiFig}.

We now note that %for any $t\in[t_j,t_{j+1}]$, we have
\begin{align}\label{maxAngleStep2}
\eta(t_{j+1})-\eta(t_j)\leq
\frac{\deltaT_n}{\rhoD}.
\end{align}
To see this, note that by inequality~\eqref{ineq:liondist}, the lion $\lion_n$ is running on or in the exterior of the circle $C(g,\rho)$ while $\eta$ measures the angle from $\lion_n$ to $g$.
The angle is increasing the most when the lion runs counterclockwise
around $C(g,\rhoD)$ with unit speed, in which case equality holds
in~\eqref{maxAngleStep2}.
% when $t=t_{j+1}$, and for smaller values of $t$, the inequality is always strict.

Assume now that the moves $i,i+1,\ldots,i+k$ are all avoidance moves and
$t_{i+k}\leq t_i+\tau$.
(See Figure~\ref{angleIncrease}.)
For $j\in\{i,i+1,\ldots,i+k\}$, let
$q_j\mydef \man(t_{j+1})$ be the point to which the man
chooses to run at time $t_j$ as defined in case~\ref{choice:c}.
Let $\xi_j\in[\xi(t_j),\xi(t_j)+\pi]$ be the angle of $q_j$ on
$C(\lion_n(t_j),\rD)$, i.e., the angle such that
$$
q_j=\lion_n(t_j)+\rD\cdot (\cos \xi_j,\sin \xi_j),
$$
as shown in Figure~\ref{figure:xixi}.

\begin{figure}
\centering
\includegraphics[page=10]{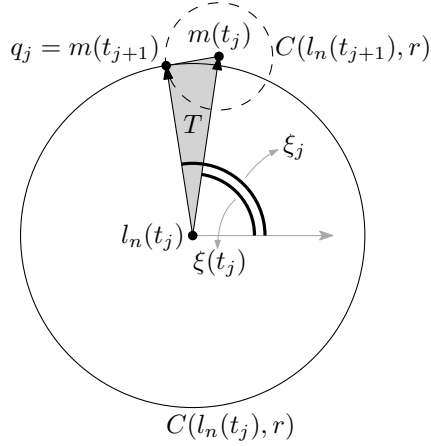}
\caption{The relation between the angles $\xi(t_j)$ and $\xi_j$.}
\label{figure:xixi}
\end{figure}

Let
$q'_{j+1}\mydef q_j+(\lion_n(t_{j+1})-\lion_n(t_j))$ be the point on the circle
$C(\lion_n(t_{j+1}),\rD)$ corresponding to $q_j$ on $C(\lion_n(t_{j}),\rD)$.
Then
\begin{align*}
\|\man(t_{j+1})-q'_{j+1}\| =\|q_j-q'_{j+1}\|=\|\lion_n(t_j)-\lion_n(t_{j+1})\| \leq\deltaT_n.
\end{align*}
Hence,
\begin{align*}
\|q_{j+1}-q'_{j+1}\|\geq\|q_j-q_{j+1}\|-\|q_j-q'_{j+1}\|\geq (1+\ee_n)\deltaT_n-\deltaT_n
=\ee_n\deltaT_n,
\end{align*}
and we get that
$$\xi_{j+1}-\xi_j\geq 2\arcsin\frac {\|q_{j+1}-q'_{j+1}\|}{2r}\geq 2\arcsin\frac {\ee_n\deltaT_n}{2r}>\frac{\ee_n\deltaT_n}{r}$$
for any $j\in\{i,\ldots,i+k-1\}$ and hence that
$\xi_{i+k}-\xi_i> \frac{k\ee_n\deltaT_n}{r}$.

Let $T$ be the triangle formed by the points $\lion_n(t_j)$, $\man(t_j)$, and $q_j=\man(t_{j+1})$ (see Figure~\ref{figure:xixi}), and note that $\xi_j-\xi(t_j)$ is the angle of the corner $\lion_n(t_j)$ in $T$.
The lengths $\|\lion_n(t_j)-q_j\|$ and $\|\man(t_j)-q_j\|$ are fixed at $r$ and $(1+\ee_n)\deltaT_n$, respectively, while the last length $\|\lion_n(t_j)-\man(t_j)\|$ is only known to be in the interval $[r-\deltaT_n,r+\deltaT_n]$.
It follows that $\xi_j-\xi(t_j)$ is maximum when the angle at $\man(t_j)$ is a right angle, in which case
$$
\xi_j-\xi(t_j)\leq \arcsin\frac{(1+\ee_n)\deltaT_n}{r}<\frac{2(1+\ee_n)\deltaT_n}{r}.
$$

%Clearly, $\xi_j$ is largest when $\man(t_j)$ is exactly on the circle $C(\lion_n(t_j),\rD)$.
% In this case, the arc on $C(\lion_n(t_j),r)$ from $\man(t_j)$ to $q_j$ is at most $\frac{\pi \|\man(t_j)-q_j\|}{2\rD}=\frac{\pi (1+\ee_n)\deltaT_n}{2\rD}$, so we get that $\xi_j\in\left[\xi(t_j),\xi(t_j)+ \frac{\pi(1+\ee_n)\deltaT_n}{2\rD}\right]$.
% Also note that inequality~\eqref{maxAngleStep2} gives $\eta(t_{i+k})-\eta(t_i)\leq k\deltaT_n/\rho$.

Hence we have
\begin{align*}
D&\mydef\left(\xi(t_{i+k})-\xi(t_i)\right)-
\left(\eta(t_{i+k})-\eta(t_i)\right)\\
&> \left(\xi_{i+k}-\frac{2(1+\ee_n)\deltaT_n}{\rD}-\xi_i\right)-
\frac{k\deltaT_n}{\rho}\\
&> \frac{k\ee_n\deltaT_n}{\rD}-\frac{2(1+\ee_n)\deltaT_n}{\rD}-\frac{k\ee_n\deltaT_n}{2\rD}\\
%&\geq \frac{k\ee_n\deltaT_n}{2\rD}-\frac{(1+\ee_n)\deltaT_n}{\rD}\\
&> \frac{k\ee_n\deltaT_n}{2\rD}-2,
\end{align*}
where the last inequality follows since $\frac{(1+\ee_n)\deltaT_n}{r}<\frac{(3+\ee_n)\deltaT_n}{r}<1$ by the definition of $\deltaT_n$.

Now, if $k\geq\frac{6\pi\rD}{\ee_n\deltaT_n}$, we get
$D>3\pi-2>2\pi$. Hence, after time $\frac{6\pi\rD}{\ee_n\deltaT_n}\cdot\deltaT_n=\tau$,
$\xi$ has increased by at least $2\pi$ more than $\eta$. It follows that at some point
in time $t\in[t_i,t_i+\tau]$ and some $z\in\ZZ$, we have
$$
\lvert\xi(t)-\eta(t)-2z\pi\rvert = 0.
$$
Let $j\in\{i,\ldots,i+k\}$ be maximum such that $t_j\leq t$.
We now prove that condition~\eqref{xieta} is satisfied for the chosen $t_j$. Clearly,
$t-t_j\leq\deltaT_n$. It then follows from inequalities~\eqref{maxAngleStep} and
\eqref{maxAngleStep2} that
\begin{align*}
\lvert\xi(t_j)-\eta(t_j)-2z\pi\rvert
\leq{} & \lvert\xi(t_j)-\xi(t)\rvert+\lvert\xi(t)-\eta(t)-2z\pi\rvert \\
  &+\lvert\eta(t)+2z\pi-\eta(t_j)-2z\pi\rvert
 \\
\leq{} & 2\arcsin\frac{(2+\ee_n)\deltaT_n}{2(\rD-\deltaT_n)}+\frac{\deltaT_n}{\rho}\leq\varphi,
\end{align*}
where the last inequality holds by the choice of $\varphi$.
This finishes the proof of the claim.
\end{proof}

For $i\in\NN_0$, define the \emph{canonical interval} $I_i$
as $I_i\mydef\left[i\deltaT_{n-1}/p,(i+1)\deltaT_{n-1}/p\right)$,
i.e., $I_i$ is the interval of time such that the man would run from the $i$th to the $(i+1)$st milestone on the path defined by the strategy $\manStrat_{n-1}$.
We say that $I_i$ \emph{ends} at time
$t=(i+1)\deltaT_{n-1}/p$.
Note that if $t\in I_i$, then $\goal(t)=\manStrat_{n-1}((i+1)\deltaT_{n-1}/p)$
and $\goal(t)$ changes when $I_i$ ends.

As a consequence of Claim~\ref{claim2} and Claim~\ref{claim3}, we get the following claim, informally stating that once the man gets close to $\goal(t)$ for $t\in I_i$, then he will stay close to $\goal(t)$ until $I_i$ ends.

\begin{claim}\label{claim4}
If $\bar t\in I_i$ and $\|\man(\bar t)-\goal(\bar t)\|\leq\rhoD'$, then for
every $t\geq \bar t$, $t\in I_i$, we have
$$\|\man(t)-\goal(t)\|\leq \rhoD'+(1+\ee_n)\tau.$$
\end{claim}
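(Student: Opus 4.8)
The plan is to show that once the man is within distance $\rhoD'$ of $\goal$ during $I_i$, the lion $\lion_n$ can push him away only during a run of avoidance moves, whose length Claim~\ref{claim3} caps at $\tau$; during such a run he travels at most $(1+\ee_n)\tau$, and as soon as it ends he heads straight back toward $\goal$, so his distance to $\goal$ only decreases. Fix $t\in I_i$ with $t\geq\bar t$ and write $\goal\mydef\goal(\bar t)$, which equals $\goal(t)$ since $\goal$ is constant on $I_i$. Assume $\|\man(t)-\goal\|>\rhoD'$ (otherwise we are done) and put $s\mydef\sup\{s'\in[\bar t,t]:\|\man(s')-\goal\|\leq\rhoD'\}$. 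As $\man$ is continuous, this set is closed and contains $\bar t$, so $\|\man(s)-\goal\|\leq\rhoD'$ while $\|\man(s')-\goal\|>\rhoD'$ for all $s'\in(s,t]$. It suffices to prove $t-s\leq\tau$, since then $\|\man(t)-\goal\|\leq\|\man(s)-\goal\|+(1+\ee_n)(t-s)\leq\rhoD'+(1+\ee_n)\tau$.

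The first step is to see that the move active just after $s$ is an avoidance move. Since $(1+\ee_n)\deltaT_n<\rhoD'$ (by the choice of $\deltaT_n$), the distance to $\goal$ exceeds $(1+\ee_n)\deltaT_n$ on $(s,t]$; and in a free or escape move the man runs straight toward $\goal$, so over such a move his distance to $\goal$ either decreases monotonically or stays below $(1+\ee_n)\deltaT_n$ --- in neither case can it rise from $\leq\rhoD'$ at $s$ to $>\rhoD'$ just after $s$. Hence the move at $s$ is an avoidance move; let $t_{j^*}\leq s$ be its starting time of choice. By Claim~\ref{claim2}, from move $j^*$ on the man makes avoidance moves until he makes an escape move (if he ever does), after which he makes no further avoidance move before $\goal$ is reached or changes; as $\goal$ is fixed on $I_i$ and the distance to $\goal$ is positive on $(s,t]$, this means that after such an escape move the man runs straight toward $\goal$ (by free and escape moves, with the distance non-increasing) until $I_i$ ends.

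Now apply Claim~\ref{claim3} to move $j^*$: one of its three events occurs within time $\tau$ of $t_{j^*}$. Suppose for the moment that $t_{j^*}\in I_i$ and that no such event occurs at a time $\leq s$. If $\goal(\cdot)$ changes, then $I_i$ ends by $t_{j^*}+\tau\leq s+\tau$, so $t<s+\tau$. If $\|\man(\cdot)-\goal(\cdot)\|<\rhoD'$ at some such time, that time lies in $(s,t_{j^*}+\tau]$ and, since the distance to $\goal$ exceeds $\rhoD'$ on all of $(s,t]$, it must exceed $t$, so again $t<s+\tau$. Finally, if the man makes an escape move at some $t_e\leq t_{j^*}+\tau$: either $t_e>t$ and $t<s+\tau$ as before, or $t_e\leq t$, and then by the second paragraph the distance to $\goal$ is non-increasing on $[t_e,t]$, so $\|\man(t)-\goal\|\leq\|\man(t_e)-\goal\|\leq\|\man(s)-\goal\|+(1+\ee_n)(t_e-s)\leq\rhoD'+(1+\ee_n)(t_{j^*}+\tau-s)\leq\rhoD'+(1+\ee_n)\tau$, using $t_{j^*}\leq s$. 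Thus in every case $\|\man(t)-\goal\|\leq\rhoD'+(1+\ee_n)\tau$.

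The main obstacle is to discharge the two provisos in the last paragraph, that is, to match Claim~\ref{claim3}'s window ``within time $\tau$ of the avoidance move'' (measured from a time of choice) with the continuous departure time $s$. Two things can go wrong: the distance event of Claim~\ref{claim3} for $j^*$ may be triggered during move $j^*$ itself, at a time in $[t_{j^*},s]$ rather than genuinely after $s$; and $t_{j^*}$ may fall just before the start of $I_i$ when $\bar t$ lies within one step of it, so that the ``$\goal$ changes'' event fires at the milestone boundary. In both cases one uses that no time of choice lies strictly between $t_{j^*}$ and $s$ to locate the offending event inside move $j^*$, and then reruns the argument on the next move, which by Claim~\ref{claim2} is again an avoidance move or an escape move; this terminates since only finitely many times of choice lie in $I_i$. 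Besides this, the only facts needed are that free and escape moves are aimed at $\goal$ by the definition of the strategy and that $(1+\ee_n)\deltaT_n<\rhoD'$ by the choice of $\deltaT_n$.
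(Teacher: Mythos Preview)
Your proof is correct and follows essentially the same approach as the paper: isolate a (maximal) stretch where $\|\man-\goal\|>\rhoD'$, argue that the move at its left endpoint must be an avoidance move, and then invoke Claim~\ref{claim3} together with Claim~\ref{claim2} to cap the duration of that stretch by $\tau$. Your last paragraph makes explicit the boundary bookkeeping (the event of Claim~\ref{claim3} firing before $s$, or $t_{j^*}$ landing in $I_{i-1}$) that the paper's proof simply glosses over, but the underlying argument is the same.
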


\begin{proof}
Consider a maximal interval of time $J\subset [\bar t,(i+1)\deltaT_{n-1}/p)\subset I_i$ such that for all $t\in J$, we have $\|\man(t)-\goal(t)\|> \rhoD'$.
We show that during $J$, the man gets to distance at most $\rhoD'+(1+\ee_n)\tau$ from $g(t)$, and the claim thus follows.
Let $J=(t'_0,t'_1)$.
At time $t'_0$, the man must be making an avoidance move since his distance to $g(t)$ is increasing beyond $\rhoD'$.
By Claim~\ref{claim3}, one of the following three event occurs before time $\tau$ has passed after $t'_0$: (i) $I_i$ ends, (ii) $\|m(t)-g(t)\|<\rhoD'$, or (iii) the man makes an escape move.
In case (i) and (ii), it trivially follows that the length of $J$ is at most $\tau$.
Hence, the man can get to distance at most $\rhoD'+(1+\ee_n)\tau$ from $g(t)$ during $J$.
In case (iii), let the escape move be at the time of choice $t_j$.
We then have that in the interval $[t'_0,t_j]$, the man gets to distance at most $\rhoD'+(1+\ee_n)\tau$ from $g(t)$.
By Claim~\ref{claim2}, he then makes escape moves or free moves until he reaches $g(t)$ or $I_i$ ends.
Therefore, his maximum distance to $g(t)$ during $J$ is also $\rhoD'+(1+\ee_n)\tau$ in this case.
\end{proof}

Finally, the following claim informally states that the man is always close to where he would be according to the strategy $M_{n-1}$.

\begin{claim}\label{claim5}
For any $i\in\NN_0$ and at any time during the canonical interval $I_i$, the man is at distance at most $\rhoD'+2(1+\ee_n)\tau$ away from the segment $\manStrat_{n-1}(I_i)$ and when $I_i$ ends, the man is within distance $\rhoD'+(1+\ee_n)\tau$ from the endpoint $\manStrat_{n-1}((i+1)\deltaT_{n-1}/p)$ of the segment.
%, that is,
%$$\|\man((i+1)\deltaT_{n-1})-\manStrat_{n-1}((i+1)\deltaT_{n-1})\|\leq
%\rhoD'+(1+\ee_n)\tau.$$
\end{claim}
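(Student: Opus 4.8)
The plan is to prove the claim by induction on $i$, using Claim~\ref{claim4} as the workhorse and tracking carefully what happens at the moment $I_i$ begins. The two assertions to maintain are: (a) at every $t\in I_i$, $\man(t)$ is within distance $\rhoD'+2(1+\ee_n)\tau$ of the segment $\manStrat_{n-1}(I_i)$; and (b) when $I_i$ ends, $\man$ is within $\rhoD'+(1+\ee_n)\tau$ of the terminal milestone $\manStrat_{n-1}((i+1)\deltaT_{n-1}/p)=\goal(t)$ for $t\in I_i$. I would phrase the induction hypothesis as (b) for $I_{i-1}$ — i.e.\ at the start of $I_i$ the man is within $\rhoD'+(1+\ee_n)\tau$ of $\goal$, where now $\goal$ is the \emph{previous} milestone, which is the first endpoint of the segment $\manStrat_{n-1}(I_i)$. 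The base case $i=0$ holds because $\man(0)=\man_0=\manStrat_{n-1}(0)$ and $\rhoD'>0$.

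For the inductive step, fix $I_i$ and let $\goal$ denote its terminal milestone (the value of $\goal(t)$ for $t\in I_i$). There are two cases. \emph{Case 1: the man comes within $\rhoD'$ of $\goal$ at some time $\bar t\in I_i$.} Then Claim~\ref{claim4} gives $\|\man(t)-\goal\|\le\rhoD'+(1+\ee_n)\tau$ for all later $t\in I_i$, in particular when $I_i$ ends, which is exactly (b) for $I_i$; and since before $\bar t$ the man started (by the induction hypothesis) within $\rhoD'+(1+\ee_n)\tau$ of the \emph{first} endpoint of the segment, and moves at speed $1+\ee_n$, one checks he cannot stray farther than $\rhoD'+2(1+\ee_n)\tau$ from the segment — here I would bound the time spent getting from "near the first endpoint" to "within $\rhoD'$ of $\goal$" using Claims~\ref{claim2} and~\ref{claim3}, which say any stretch of avoidance moves lasts at most $\tau$ before an escape move, after which the man heads straight for $\goal$ and monotonically approaches the segment. \emph{Case 2: the man never gets within $\rhoD'$ of $\goal$ during $I_i$.} Then in particular event (ii) of Claim~\ref{claim3} never fires, so by Claims~\ref{claim2}--\ref{claim3} the man alternates between avoidance stretches of length $\le\tau$ and escape/free stretches; during an escape/free stretch he runs straight toward $\goal$ along a line, hence toward the segment, and during an avoidance stretch (length $\le\tau$, speed $1+\ee_n$) he can drift at most $(1+\ee_n)\tau$. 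Starting within $\rhoD'+(1+\ee_n)\tau$ of the first endpoint, he therefore stays within $\rhoD'+2(1+\ee_n)\tau$ of the segment throughout — giving (a) — and when $I_i$ ends he is either still in or just finished an escape/free run toward $\goal$, or just finished an avoidance run; in the latter case the $\rhoD'+(1+\ee_n)\tau$ bound of (b) must be checked by noting that he entered that avoidance run while already within $\rhoD'$ of $\goal$... but that contradicts Case 2's hypothesis unless the preceding state was within $\rhoD'+(1+\ee_n)\tau$, which it is.

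The step I expect to be the main obstacle is the bookkeeping in Case 2 (and in the early part of Case 1) to show that the \emph{transversal} deviation from the segment $\manStrat_{n-1}(I_i)$ — as opposed to the distance from the terminal milestone — never exceeds $\rhoD'+2(1+\ee_n)\tau$. The subtlety is that during an avoidance move the man is circling $\lion_n$, not heading toward $\goal$, so he can move sideways relative to the segment; one must argue that the total sideways budget is controlled by a \emph{single} avoidance-stretch length $\tau$ rather than accumulating over many stretches. This is where $\rD$ being chosen so small that "the lion can only annoy the man once between neighboring milestones" is used: by the choice of $\rD$ (the bound involving $\deltaT_{n-1}/p$), a single detour is short relative to the spacing $\ell/p\le\deriv_n/2$ of milestones, so after one escape move the man reaches the next milestone (or $I_i$ ends) before $\lion_n$ can force another avoidance move within the same $I_i$ — this should be isolated as a preliminary sub-claim, proved from the definitions of $\rD$, $\rhoD'$, $\tau$, and $\deltaT_n$, and then the rest of Claim~\ref{claim5} follows cleanly. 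Finally I would note that (a) together with (b) for all $i$, combined with the triangle inequality and the fact that $\manStrat_{n-1}(I_i)$ has length $\ell/p\le\deriv_n/2$, is what feeds into establishing property~\ref{LEMMA:MAIN:PROP0a} once $\rD$ (hence $\rhoD'$ and $\tau$) is chosen small enough that $\rhoD'+2(1+\ee_n)\tau\le\deriv_n/2$, which is guaranteed by the second term in the definition of $\rD$.
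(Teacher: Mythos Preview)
Your overall structure (induction on $i$, appeal to Claims~\ref{claim2}--\ref{claim4}, and the recognition that the crux is ``only one disturbance per canonical interval'') matches the paper. The paper's case split, however, is different and cleaner: it splits on whether there is a first avoidance move in $I_i$ and, if so, on its time $t$. With that split, the paper shows directly---via the explicit distance-budget calculation
\[
\deltaT_{n-1}/p\cdot(1+\ee_n)\;>\;\deltaT_{n-1}/p\cdot(1+\ee_{n-1})+\rhoD'+3(1+\ee_n)\tau
\]
(which is exactly where the first term in the definition of $\rD$ is used)---that after the single avoidance stretch and the ensuing escape, the man actually \emph{reaches} $\goal$ before $I_i$ ends. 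In particular, the man always gets within $\rhoD'$ of $\goal$ during $I_i$; your Case~2 is vacuous.

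This is precisely where your argument breaks. In your Case~2 you try to establish~(b) by tracking the state at the end of $I_i$ (``he entered that avoidance run while already within $\rhoD'$ of $\goal$\ldots but that contradicts Case~2's hypothesis unless the preceding state was within $\rhoD'+(1+\ee_n)\tau$, which it is''). That reasoning is circular: you never explain why, if the man is doing escape/free moves toward $\goal$ and has never been within $\rhoD'$, his distance to $\goal$ should drop below $\rhoD'+(1+\ee_n)\tau$ by the end of $I_i$. The missing piece is the quantitative time budget above, which you correctly flag as a ``preliminary sub-claim'' but do not prove. Once you carry out that calculation you will see it gives more than you asked for: it forces the man to reach $\goal$, eliminating Case~2 entirely and letting Claim~\ref{claim4} finish~(b). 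For~(a), the paper then argues that during the (at most one) avoidance stretch of duration $\le\tau$ the man can add at most $(1+\ee_n)\tau$ to his distance from the segment, and before that stretch he is only getting closer (free/escape moves head toward $\goal$, which lies on the segment); this is what you sketch, but it too relies on there being only one such stretch, so it again hinges on the same calculation.
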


\begin{proof}
We prove the claim by induction on $i$.
To easily handle the base case, we introduce
an auxiliary canonical interval $I_{-1}=[-\deltaT_{n-1}/p,0)$ and assume that the lions and the man (according to both strategies $\manStrat_{n-1}$ and $\manStrat_n$) are standing at their initial positions during all of $I_{-1}$.
The statement clearly holds for $i=-1$.

Assume inductively that the statement holds for $I_{i-1}$ and consider the interval $I_i$.
Let $\goal\mydef \manStrat_{n-1}((i+1)\deltaT_{n-1}/p)$.
During $I_i$, the man can run the distance
\begin{align*}
\deltaT_{n-1}/p\cdot (1+\ee_n)={} & \deltaT_{n-1}/p\cdot (1+\ee_{n-1})+\deltaT_{n-1}/p\cdot (\ee_n-\ee_{n-1}) \\
\geq{} & \deltaT_{n-1}/p\cdot (1+\ee_{n-1}) + \rhoD+2\rD+3(1+\ee_n)\tau \\
>{} & \deltaT_{n-1}/p\cdot (1+\ee_{n-1}) + \rhoD'+3(1+\ee_n)\tau,
\end{align*}
where the two inequalities follow from the definitions of $r$ and $\deltaT_{n-1}$, respectively.

By the induction hypothesis, the man is within a distance of
$\rhoD'+(1+\ee_n)\tau$ from $\manStrat_{n-1}(i\deltaT_{n-1}/p)$
at time $i\deltaT_{n-1}/p$. Thus, his distance to $\goal$ at the beginning of
interval $I_i$ is
$\|\man(i\deltaT_{n-1}/p)-g\|\leq \deltaT_{n-1}/p\cdot (1+\ee_{n-1})+\rhoD'+(1+\ee_n)\tau$,
where $\deltaT_{n-1}/p\cdot (1+\ee_{n-1})$ is the length of the interval
$M_{n-1}(I_i)$.

If the man does not make any avoidance moves during $I_i$, he runs
straight to $\goal$, so it follows that he reaches $\goal$ at the latest at time
\begin{align}\label{ineq:timeToGoal}
\frac{\|\man(i\deltaT_{n-1}/p)-g\|}{1+\ee_n}={} & \deltaT_{n-1}/p+\frac{\|\man(i\deltaT_{n-1}/p)-g\|-\deltaT_{n-1}/p\cdot (1+\ee_n)}{1+\ee_n} \nonumber \\
<{} & \deltaT_{n-1}/p+\frac{(1+\ee_n)\tau-3(1+\ee_n)\tau}{1+\ee_n} \nonumber \\
={} & \deltaT_{n-1}/p-2\tau.
\end{align}
In the remaining part of $I_i$ after reaching $\goal$, he will be within distance $\deltaT_n(1+\ee_n)$ from $g$ (possibly doing a sequence of free moves in each of which he passes over $\goal$), so the statement is true in this case.

\begin{figure}
\centering
\includegraphics[page=11]{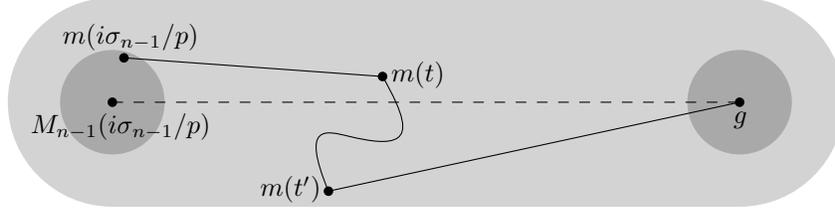}
\caption{Situation of the proof of Claim~\ref{claim5}.
The dark gray disks are the points at distance at most $\rhoD'+(1+\ee_n)\tau$ from $\manStrat_{n-1}(i\deltaT_{n-1}/p)$ and $\goal$.
The light gray hippodrome consists of the points at distance at most $\rhoD'+2(1+\ee_n)\tau$ from the segment $\manStrat_{n-1}(I_i)$.}
\label{figure:claim5}
\end{figure}

Otherwise, let $t\in I_i$ be the first time of choice at which he makes an avoidance move during $I_i$; see Figure~\ref{figure:claim5}.
If $t\geq (i+1)\deltaT_{n-1}/p-2\tau$, then he already reached $\goal$ before time $t$, and the statement follows from Claim~\ref{claim4}, so assume that $t<(i+1)\deltaT_{n-1}/p-2\tau$.
Hence, Claim~\ref{claim3} gives that at some time $t'\leq t+\tau$, either
\begin{enumerate}[leftmargin=1.6em]
\item\label{claim5:1}
the man gets within a distance of $\rhoD'$
from $\goal$, or

\item\label{claim5:2}
he does an escape move.
\end{enumerate}

We first prove that in the interval $[t,t']$,
the distance from the man to the segment $\manStrat_{n-1}(I_i)$
is at most $\rhoD'+2(1+\ee_n)\tau$.
To this end, note that during the preceding interval $[i\deltaT_{n-1}/p,t)$ of $I_i$, he makes no avoidance moves and thus only gets closer to the segment.
His initial distance is $\rhoD'+(1+\ee_n)\tau$ by the induction hypothesis, so that is also a bound on his distance up to time $t$.
Thus, since $t'\leq t+\tau$, his distance at time $t'$ can be at most $\rhoD'+2(1+\ee_n)\tau$.

It remains to be proved that the man stays within distance
$\rhoD'+2(1+\ee_n)\tau$ from $\manStrat_{n-1}(I_i)$ after
time $t'$ and that he is at distance at most
$\rhoD'+(1+\ee_n)\tau$ from $\goal$ at time $(i+1)\deltaT_{n-1}/p$.
If we are in case~\ref{claim5:1}, the statement follows from
Claim~\ref{claim4}, so assume case~\ref{claim5:2}.

By Claim~\ref{claim2}, the man will run directly toward $\goal$ after time $t'$ until
(i) he reaches $\goal$ or (ii) $I_i$ ends.
We now claim that we will always be in case (i), i.e., he always reaches $\goal$ before $I_i$ ends.
The length of the path he has to run in order to reach $\goal$ is at most
$$
L\mydef \|\man(i\deltaT_{i-1})-\man(t)\|+\tau(1+\ee_n)+\|\man(t')-g\|,
$$
where $\tau(1+\ee_n)$ is a bound on the length of the path he runs during $[t,t')$ while making avoidance moves.
The triangle inequality gives
$$\|\man(t')-g\|\leq \|\man(t')-\man(t)\|+\|\man(t)-g\|\leq \tau(1+\ee_n)+\|\man(t)-g\|.$$
We therefore get
$$
L\leq \|\man(i\deltaT_{i-1})-g\|+2\tau(1+\ee_n).
$$

We now get from inequality~\eqref{ineq:timeToGoal} that the time it takes the man to run to $g$ is at most
$$
\frac L{1+\ee_n}\leq \frac{\|\man(i\deltaT_{i-1})-g\|}{1+\ee_n}+2\tau=\deltaT_{n-1}/p,
$$
so he indeed reaches $\goal$ before $I_i$ ends.
Note that while he runs to $\goal$, he stays within distance $\rhoD'+2(1+\ee_n)\tau$ from segment $\manStrat_{n-1}(I_i)$.
After he has reached $\goal$, we get by Claim~\ref{claim4} that he stays within distance $\rhoD'+(1+\ee_n)\tau$ from $\goal$ for the rest of $I_i$.
This finishes the proof.
\end{proof}

We are now ready to finish our proof of Lemma~\ref{LEMMA:MAIN}.
We define the safety distance to lion $\lion_n$ as $c_n\mydef \rD-(3+\ee_n)\deltaT_n$.
By Claim~\ref{claim1}, we know that the safety distance is maintained.

% To see that the man escapes the convex hull of the lions,
We now give a bound on the distance $\|M_{n-1}(t)-M_n(t)\|$ at any time $t$.
Suppose that $t\in I_i$.
Claim~\ref{claim5} says that during interval $I_i$, the distance from the man to the segment $\manStrat_{n-1}(I_i)$ is at most
\begin{align}\label{eq:boundingSegDist}
\rhoD'+2(1+\ee_n)\tau<
\rhoD+2\rD+2(1+\ee_n)\tau\leq \deriv_n/2,
\end{align}
where the first inequality follows from the choices of $\rho'$ and $\deltaT_n$ and the second from the definition of $r$.
Since $M_{n-1}(t)$ is a point on the segment $\manStrat_{n-1}(I_i)$ of length $\deltaT_{n-1}/p\cdot (1+\ee_{n-1})\leq\deriv_n/2$, the bound~\eqref{eq:boundingSegDist} implies that
\begin{align}\label{eq:manmanDist}
\|M_{n-1}(t)-M_n(t)\|\leq \deriv_n/2+\deriv_n/2=\deriv_n.
\end{align}
This proves the lemma.

% It follows that $\|\manStrat_1(t)-\manStrat_n(t)\|\leq 2\sum_{k=2}^n C_k$.
% Since $\manStrat_1(t)$ traverses a ray with constant speed $1+\ee/2>1$, the man eventually escapes the convex hull and the distance to the convex hull diverges to $\infty$ as $t\longrightarrow\infty$, as was also claimed.
\end{proof}

\section{Concluding Remarks}
We have shown that a fast man can survive any finite number of lions, and that the lions do not also have a winning strategy since the man's strategy is locally finite.
Furthermore, the man has a winning strategy $\manStrat_\infty$ against any countably infinite set of lions.
However, this strategy is in general not locally finite.
Indeed, if it was, there would be $\delta>0$ such that the man's path according to $\manStrat_\infty$ up to time $\delta$ was already determined at time $0$.
If the lions' start points were dense in the plane, a lion sufficiently close to the point $\manStrat_\infty(\delta)$ could catch the man at time $\delta$ by running to that point, and hence $\manStrat_\infty$ would not be winning.
Therefore, $\manStrat_\infty$ cannot be locally finite, so the lions might also have a winning strategy, but we have not been able to find one.

We finally note that for the man to have a winning strategy, it is necessary that there are only \emph{countably} many lions.
Indeed, the man cannot win if there is a lion at every point of the plane except for his starting point.

\section*{Acknowledgments}

We thank the reviewers of the published paper~\cite{abrahamsen2020escaping} for their comments and suggestions that improved the paper a lot.
We thank Mehdi Salimi for making us aware of the papers~\cite{chernousko1976problem,ibragimov2012evasion}.

%%
%% Bibliography
%%

%% Either use bibtex (recommended), but commented out in this sample

%\bibliography{dummybib}

%% .. or use bibitems explicitely

\newcommand{\mdoi}[1]{\href{http://dx.doi.org/#1}{\texttt{doi.org/#1}}}

\bibliographystyle{plain}

% \bibliography{bib}

% \begin{thebibliography}{50}
% \bibitem{Simpson} Homer J. Simpson. \textsl{Mmmmm...donuts}. Evergreen Terrace Printing Co., % Springfield, Somewhere, USA, 1998
% \end{thebibliography}

\end{document}